\documentclass[final,leqno,showlabe]{siamltex}
\usepackage{amsmath}
\usepackage{amssymb}
\usepackage{cases}
\usepackage{enumerate}
\usepackage{graphicx}
\usepackage{cleveref}
\usepackage[usenames]{color}
\usepackage{wrapfig}
\usepackage{stmaryrd}
\SetSymbolFont{stmry}{bold}{U}{stmry}{m}{n}
\newtheorem{thm}{Theorem}[section]

\newcommand{\norm}[1]{\lVert#1\rVert}

\def\subFS{\scriptscriptstyle{FS}}
\def\subVS{\scriptscriptstyle{VS}}

\setlength{\parskip}{1\parskip}

\renewcommand{\vec}[1]{\mathbf{#1}}

\title{A parametric finite element method for solid-state dewetting problems in three dimensions}
\author{Quan Zhao\thanks {Department of Mathematics, National University of Singapore, Singapore 119076 (quanzhao90@u.nus.edu). This author's research was supported by the Ministry of Education of Singapore grant R-146-000-247-114.}
\and Wei Jiang\thanks{Corresponding author. School of Mathematics and Statistics {\rm\&} Computational Science Hubei Key Laboratory, Wuhan University, Wuhan 430072, P.R. China (jiangwei1007@whu.edu.cn). This author's research was supported by the National Natural Science Foundation of China No. 11871384, and Natural Science Foundation of Hubei Province No. 2018CFB466.}
\and Weizhu Bao\thanks{Department of Mathematics, National University of Singapore, Singapore 119076 (matbaowz@nus.edu.sg, URL:
http://blog.nus.edu.sg/matbwz/). This author's research was supported by the Ministry of Education of Singapore grant R-146-000-290-114 and the National Natural Science Foundation of China No. 91630207.}
}

\date{}
\begin{document}

\maketitle
%%%%% Begin Abstract %%%%%%%%%%%

\begin{abstract}
We propose a parametric finite element method (PFEM) for efficiently solving the morphological evolution of solid-state dewetting of thin films on a flat rigid substrate in three dimensions (3D). The interface evolution of the dewetting problem in 3D is described by a sharp-interface model, which includes surface diffusion coupled with contact line migration.
A variational formulation of the sharp-interface model is presented, and a PFEM is proposed for spatial discretization. For temporal discretization, at each time step, we first update the position of the contact line according to the relaxed contact angle condition; then, by using the position of the new contact line as the boundary condition, we solve a linear algebra system resulted from the discretization of PFEM to obtain the new interface surface for the next step. The well-posedness of the solution of the PFEM is also established. Extensive numerical results are reported to demonstrate the accuracy and efficiency of the proposed PFEM and to show the complexities of the dewetting morphology evolution observed in solid-state dewetting experiments.
\end{abstract}
%%%%% end %%%%%%%%%%%

%%%%% Keywords %%%%%%%%%%%
%\pac{}
%\ams{35Q55, 65M70, 65N25, 65N35, 81Q05}

%\pac{68.35.-p, 68.55.Jk, 68.37.-d, 81.16.Rf}

\begin{keywords}
Solid-state dewetting, surface diffusion, moving contact line,
sharp-interface model, Cahn-Hoffman $\boldsymbol{\xi}$-vector.
\end{keywords}

\begin{AMS}
74H15, 74S05, 74M15, 65Z99
\end{AMS}

\pagestyle{myheadings} \markboth{Q. Zhao, W. Jiang and W. Bao}
{A parametric finite element method for solid-state dewetting in three dimensions}

%=============================================Introduction===============================================
%=====================================================================================================
\section{Introduction}

Solid-state dewetting is a ubiquitous phenomenon in materials science, and it describes the agglomeration of solid thin films into arrays of isolated particles on a substrate (e.g., see the review papers \cite{Thompson12, Leroy16}). In recent years, solid-state dewetting has found wide applications in thin film technologies, and it can be used to produce the controlled formation of an array of nanoscale particles, e.g., used in sensors \cite{Mizsei93} and as catalysts for carbon \cite{Randolph07} and semiconductor nanowire growth \cite{Schmidt09}. Recently, it has attracted extensive attention of many research groups, and has been widely studied from the experimental (e.g.,~\cite{Ye10b,Ye11a,Amram12,Rabkin14,Naffouti16,Kovalenko17}) and theoretical (e.g.,~\cite{Srolovitz86a,Wong00,Dornel06,Jiang12,Wang15,Jiang16,Bao17,Kim13,Zucker16}) points of view.

The dewetting of thin solid films deposited on substrates is similar to the dewetting phenomena of liquid films.
Although liquid-state wetting\slash dewetting problems have been extensively studied in the fluid mechanics(e.g.,~\cite{deGennes85,Qian06,Xu10,Xu11}), solid-state dewetting problems (i.e., surface diffusion-controlled geometric evolution) pose a considerable challenge for materials science, applied mathematics, and scientific computing. The major challenge comes from the difference of their mass transports. In general, surface diffusion has been recognized as the dominant mass transport for solid-state dewetting, and has played an essential role in determining the morphology evolution of solid thin films during the dewetting. The surface diffusion equation for the evolution of the film/vapor interface with isotropic surface energy (i.e., a constant, labeled as $\gamma_0$) was given by Mullins~\cite{Mullins57},
\begin{equation}
v_n=B\nabla_{_S}^2\mathcal{H},\quad\text{with}\quad B=\frac{D_s\gamma_0\Omega_0^2\nu}{k_B T},
\label{eqn:isoSF}
\end{equation}
where $v_n$ is the normal velocity of the film/vapor interface (surface), $D_s$ is the surface diffusivity, $k_B T$ is the thermal energy, $\nu$ is the number of diffusing atoms per unit area, $\Delta_s$ is the Laplace-Beltrami operator,
and $\mathcal{H}$ represents the mean curvature of the interface. For anisotropic surface energy (i.e., a function,
labeled as $\gamma=\gamma(\vec n)$ with $\vec n=(n_1,n_2,n_3)^T$ representing the unit outward normal orientation of the interface), it means that the surface energy (density) exhibits dependence on the crystalline orientation,  and \eqref{eqn:isoSF} can be readily
extended to the anisotropic case by replacing the mean curvature $\mathcal{H}$ with the weighted mean curvature $\mathcal{H}_\gamma$ as~\cite{Taylor92,Cahn94}
\begin{equation}
\mathcal{H}_\gamma = \nabla_{_S}\cdot\boldsymbol{\xi},
\label{eqn:anisSF}
\end{equation}
where $\nabla_{_S}$ is the surface gradient operator, and  $\boldsymbol{\xi}:=\boldsymbol{\xi}(\vec n)$ is well-known as the Cahn-Hoffman $\boldsymbol{\xi}$-vector~\cite{Cahn74,Hoffman72,Jiang18,Bao18a} which can be defined based on the homogeneous extension of $\gamma(\vec n)$ as
\begin{equation}
\boldsymbol{\xi}(\vec n) = \nabla\hat{\gamma}(\vec p)\Big|_{\vec p=\vec n},\qquad{\rm with}\;\hat{\gamma}(\vec p)=|\vec p|\gamma\left(\frac{\vec p}{|\vec p|}\right),\quad\forall\vec p\in\mathbb{R}^3\backslash\{\vec 0\},
\label{eqn:xivector}
\end{equation}
with $|\vec p|:=\sqrt{p_1^2+p_2^2+p_3^2}$, and $\vec p=(p_1,~p_2,~p_3)^T\in\mathbb{R}^3$.

Numerical simulations of geometric evolution equations have attracted considerate interest in decades, and different methods have been proposed in the literature for simulating the evolution of a closed curve\slash surface under mean curvature flow, surface diffusion flow, Willmore flow and etc. Theories on stable finite element methods for solving the flows of graphs~\cite{Bansch04, Deckelnick05, Xu09} have been fully studied. Unfortunately, these methods can not be directly applied to the case for general curves/surfaces (closed or open) due to the complicated governing geometric PDEs
with specific boundary conditions and unexpected deformation or topological events. Other front-tacking methods have been proposed to simulate evolutions for curves/surfaces, such as the marker-particle method~\cite{Du10, Leung11, Hon14}, and the parametric finite element method (PFEM)~\cite{Dziuk90, Bansch05, Pozzi08, Hausser05, Hausser07}. These methods are very efficient and render a very accurate representation of the interface compared to the phase field approach or level set approach. However, throughout the practical computation, these algorithms generally need complicated mesh regularizations or frequently re-meshing to improve the mesh quality for the discrete interface. To tackle this problem, Barrett {\it et al.} proposed a new novel parametric finite element method (e.g.,~\cite{Barrett07, Barrett07b, Barrett08, Barrett08JCP}), which has very good properties with respect to the distribution of mesh points. Precisely, their scheme introduced an implicit tangential motion for mesh points on the moving interface such that these mesh points automatically move tangentially along the interface and maintain good mesh properties, and this scheme has been extended for simulating the grain boundary motion and application of thermal grooving and sintering~\cite{Barrett10, Zhao17}.

Solid-state dewetting of thin films belongs to the evolution of an open curve\slash surface governed by surface diffusion and contact line migration~\cite{Jiang12,Wang15,Jiang16,Bao17,Bao18a}. In earlier years, the marker-particle method was firstly presented for solving sharp-interface models of solid-state dewetting in two dimensions (2D)~\cite{Wong00,Wang15} and three dimensions (3D)~\cite{Du10}. This method can be thought of as an explicit finite difference scheme, thus it imposes a very severe restriction on the time step for numerical stability. Furthermore, its extension to the 3D case is very tedious, inaccurate and time-consuming. For isotropic surface diffusion flow of a closed surface, B{\"a}nash {\it et al.} proposed a parametric finite element method together with a mesh regularization algorithm~\cite{Bansch05}; Barrett {\it et al.} then developed a simplified and novel variational formulation which leads to good mesh distribution properties and unconditional stability~\cite{Barrett08JCP, Barrett07}. These stable PFEMs were then generalized to the anisotropic case~\cite{Barrett07Ani, Barrett08Ani} for a special kind of anisotropy in terms of Riemannian metric form. Other related works for anisotropic flows in the literature can be found in~\cite{Hausser07, Pozzi08, Hildebrandt12} and references therein. Furthermore, the PFEMs have also been designed for simulating the evolution of solid-thin films on a substrate in 2D~\cite{Bao17b,Jiang18} and 3D case with axisymmetric geometry~\cite{Zhao19}. But how to design a PFEM for simulating solid-state dewetting problems in the full 3D is still urgent and challenging.

The goal of this paper is to extend our previous works~\cite{Bao17,Jiang18} from 2D to the 3D by using a variational formulation in terms of the Cahn-Hoffman $\boldsymbol{\xi}$-vector for simulating solid-state dewetting of thin films. More precisely, the main objectives are as follows: (i) to derive a variational formulation of the sharp-interface model for simulating solid-state dewetting problems in 3D~\cite{Bao18a}; (ii) to develop a PFEM for simulating the solid-state dewetting of thin films in 3D; (iii) to demonstrate the capability, efficiency and accuracy of the proposed PFEM; and (iv) to investigate many of the complexities which have been observed in experimental
dewetting of patterned islands on substrates, such as Rayleigh instability, pinch-off, edge retraction and corner mass accumulation.

The rest of the paper is organized as follows. In section 2, we briefly review a sharp-interface model for simulating solid-state dewetting problems in 3D, and then present a variational formulation of this sharp-interface model. In section 3, we discretize the variational formulation with a semi-implicit, mixed form PFEM. In section 4, extensive numerical results are reported to demonstrate the efficiency and accuracy of the PFEM scheme
and to show some interesting morphological evolution of solid-state dewetting in 3D. Finally, some conclusions are drawn in section 5.

\section{The model and its variational formulation}

In this section, we first review a sharp-interface model obtained recently by the authors~\cite{Bao18a} for simulating solid-state dewetting of thin films with isotropic/weakly anisotropic surface energies in 3D. Based on this model, we then propose a variational formulation via the Cahn-Hoffman $\boldsymbol{\xi}$-vector.

\subsection{The sharp-interface model}
\begin{figure}[!htp]
\centering
\includegraphics[width=0.8\textwidth]{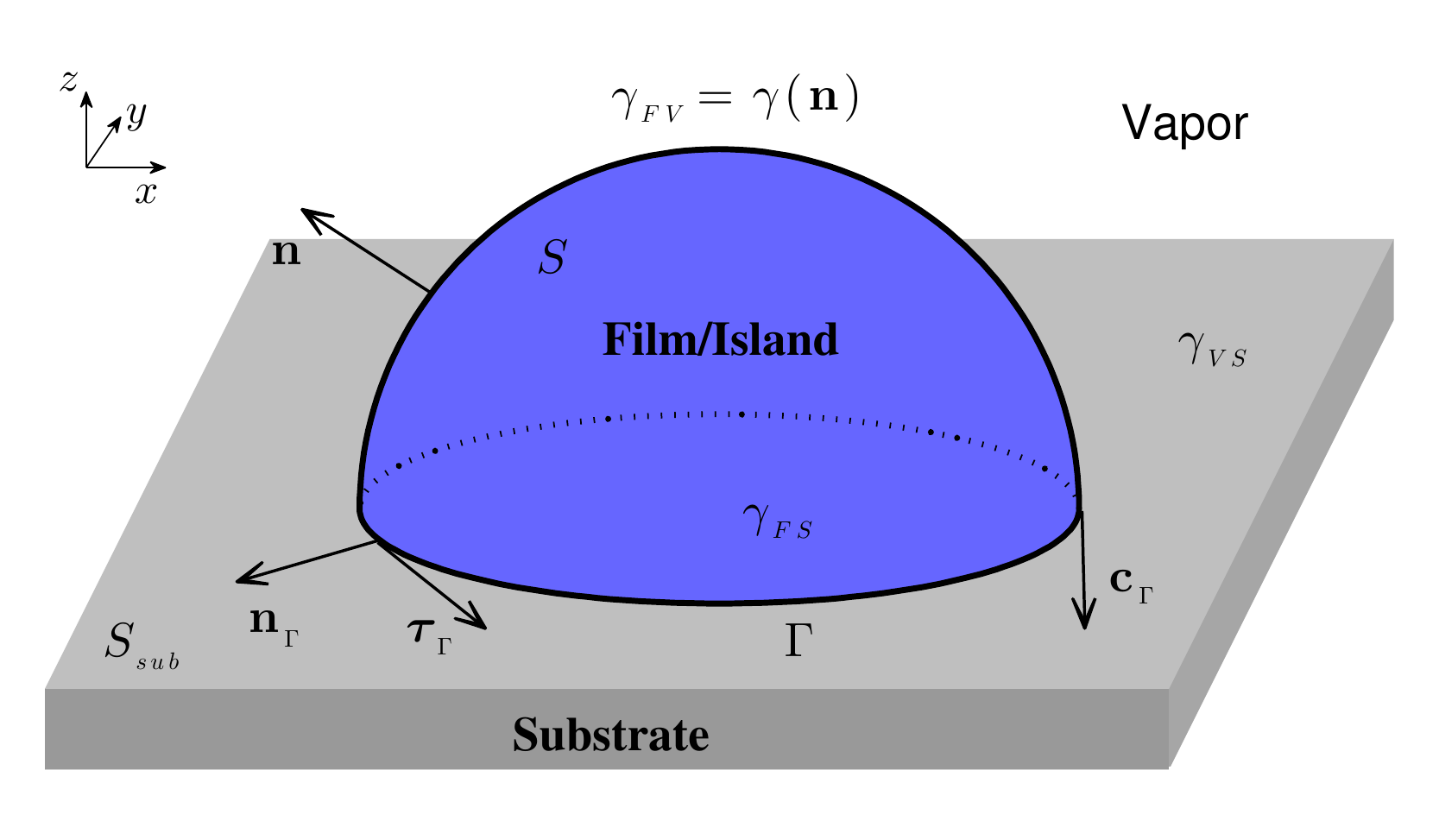}
\caption{A schematic illustration of solid-state dewetting of thin films on a flat substrate in 3D.}
\label{fig:model}
\end{figure}

As illustrated in Fig.~\ref{fig:model}, we consider that a solid thin film (shaded in blue) lies on a flat, rigid substrate (shaded in gray). The moving film/vapor interface, labeled as $S:=S(t)$, is represented by a time-dependent open surface with a plane curve boundary (i.e., the moving contact line, labeled as $\Gamma:=\Gamma(t)$) along the flat substrate $S_{\rm{sub}}$ (i.e., $Oxy$-plane). Let $U$ be a time-independent reference domain with $\vec u=(u_1,u_2)^T\in U \subseteq \mathbb{R}^2$, and assume that the moving surface $S(t):=\vec X(\vec u,~t)$ (with $\vec X=(x_1,~x_2,~x_3)^T$ or $(x,~y,~z)^T$) can be parameterized as
\begin{equation}
\vec X(\vec u,t)=(x(\vec u,t),~y(\vec u,t),~z(\vec u,t))^T:U\times[0,~T)\;\rightarrow\;\mathbb{R}^3.
\end{equation}
Furthermore, the moving contact line $\Gamma(t):=\vec X_{_\Gamma}(\cdot,~t)$ can be also parameterized over $\partial U$ as
\begin{equation}
\vec X_{_\Gamma}(\cdot,t)=(x_{_\Gamma}(\cdot,t),~y_{_\Gamma}(\cdot,t),~z_{_\Gamma}(\cdot,t))^T:\partial U\times[0,~T)\rightarrow\mathbb{R}^3.
\end{equation}
By using the approach in~\cite{Bao18a}, we can obtain a sharp-interface model for simulating solid-state dewetting of thin films with isotropic/weakly anisotropic surface energies in 3D as the following dimensionless form:
%\begin{subequations}
%\begin{numcases}{}
%\label{eqn:model1}
%\partial_t\vec X = \Delta_{_S}\mu\,\vec n,\qquad t>0,\\
%\mu = \nabla_{_S}\cdot\boldsymbol{\xi},\qquad\boldsymbol{\xi}(\vec n) = \nabla\hat{\gamma}(\vec p)\Big|_{\vec p=\vec n},
%\label{eqn:model2}
%\end{numcases}
%\end{subequations}
\begin{eqnarray}
\label{eqn:model1}
&&\partial_t\vec X=\Delta_{_S}\mu\;\vec n,\qquad t>0,\\
&&\mu=\nabla_{_S}\cdot\boldsymbol{\xi},\qquad \boldsymbol\xi(\vec n)=\nabla\hat{\gamma}(\vec p)\Big|_{\vec p=\vec n},
\label{eqn:model2}
\end{eqnarray}
where $\mu$ is the chemical potential, $\vec n =(n_1,~n_2,~n_3)^T$ is the unit outer normal vector of the moving surface $S$,  $\boldsymbol{\xi}(\vec n)=(\xi_1,~\xi_2,~\xi_3)^T$ represents the Cahn-Hoffman $\boldsymbol{\xi}$-vector associated with the surface energy density $\gamma(\vec n)$ (see Eq.~\eqref{eqn:xivector}), and $\Delta_{_S}:=\nabla_{_S} \cdot \nabla_{_S}$ is the Laplace-Beltrami operator defined on $S$. The initial condition is given as $S_0$ with boundary $\Gamma_0$ such that
\begin{equation}
S_0:=\vec X(\vec u,0)=\vec X_0(\vec u)=(x_0(\vec u),~y_0(\vec u),~z_0(\vec u))^T.
\label{eqn:initialcondition}
\end{equation}
The above governing equations are subject to the following boundary conditions:

(i) contact line condition
\begin{equation}
\label{eqn:boundcon1}
z_{_\Gamma}(\cdot,t) = 0,\qquad t\ge 0;
\end{equation}

(ii) relaxed contact angle condition
\begin{equation}
\partial_t \vec X_{_\Gamma}=-\eta\bigl(\vec c_{_\Gamma}^\gamma\cdot\vec n_{_\Gamma} - \sigma\bigr)\vec n_{_\Gamma},\qquad t\geq0;
\label{eqn:boundcon2}
\end{equation}

(iii) zero-mass flux condition
\begin{equation}
(\vec c_{_\Gamma}\cdot\nabla_{_S}\,\mu)\Big|_\Gamma=0,\qquad t\geq0.
\label{eqn:boundcon3}
\end{equation}
Here, $0<\eta<\infty$ represents the contact line mobility, and the vector $\vec c_{_\Gamma}^\gamma$ is defined as a linear combination of $\vec c_{_\Gamma}$ and $\vec n$,
\begin{equation}
 \vec c_{_\Gamma}^\gamma:=(\boldsymbol{\xi}\cdot\vec n)\,\vec c_{_\Gamma}-(\boldsymbol{\xi}\cdot\vec c_{_\Gamma})\,\vec n,
 \label{eqn:cgamma}
 \end{equation}
where $\vec c_{_\Gamma}=(c_{_\Gamma,_1},~c_{_\Gamma,_2},~c_{_\Gamma,_3})^T$ is called as the unit co-normal vector, which is normal to $\Gamma$, tangent to the surface $S$ and points outwards,  $\vec n_{_\Gamma}=(n_{_\Gamma,_1},~n_{_\Gamma,_2},~0)^T$ is the unit outer normal vector of $\Gamma$ on the substrate (as shown in Fig.~\ref{fig:model}), and $\sigma:=({\gamma_{_{\subVS}}-\gamma_{_{\subFS}}})/{\gamma_0}$ is a (dimensionless) material constant with $\gamma_0$ being the characteristic unit for surface energy, where the two constants $\gamma_{_{\subVS}}$ and $\gamma_{_{\subFS}}$ represent the vapor/substrate and film/substrate surface energy densities, respectively.

Condition (i) (i.e., Eq.~\eqref{eqn:boundcon1}) ensures that the contact line moves along the substrate during time evolution. Condition (ii) prescribes a contact angle condition along the moving contact line.  In order to understand this condition, we may consider two limiting cases as $\eta=0$ and $\eta=\infty$: (i) when $\eta=0$, the contact line moving velocity is zero, and we prescribe a fixed boundary condition such that the contact line does not move; and (ii) when $\eta\rightarrow\infty$, as we always assume that the moving velocity should be finite, condition (ii) will reduce to the so-called anisotropic Young equation~\cite{Bao18a,Bao17b}
\begin{equation}
\vec c_{_\Gamma}^\gamma\cdot\vec n_{_\Gamma} - \sigma = 0.
\label{eqn:staticangle}
\end{equation}
which prescribes an equilibrium contact angle condition. Therefore, condition (ii) actually allows a relaxation process for the dynamic contact angle evolving to its equilibrium contact angle~\cite{Wang15,Jiang16}. The last condition (iii) ensures that the total volume/mass of the thin film is conserved during the evolution, i.e., no-mass flux at the moving contact line. We remark that if the moving surface has more than one closed curve as its boundary (see examples in Fig.~\ref{fig:hollow12}), then the boundary conditions~\eqref{eqn:boundcon1}-\eqref{eqn:boundcon3} should be satisfied on each boundary curve.

The above sharp-interface model~\eqref{eqn:model1}-\eqref{eqn:model2} with boundary conditions~\eqref{eqn:boundcon1}-\eqref{eqn:boundcon2} are derived based on the consideration of thermodynamic variation~\cite{Bao17b,Bao18a}, and therefore, it naturally satisfies the thermodynamic-consistent physical law. More precisely, the total (dimensionless) free energy of the system, including the interface energy $W_{\rm int}$ and substrate energy $W_{\rm sub}$, can be written as~\cite{Bao17b,Bao18a}
\begin{equation}
W(t):=W_{\rm int} + W_{\rm sub}=\int_{S(t)}\gamma(\vec n)\,d\,S - \sigma A(\Gamma),
\label{eqn:energy}
\end{equation}
where $A(\Gamma)$ denotes the surface area enclosed by the contact line curve $\Gamma$ on the substrate. It can be easily shown that during the evolution which is governed by the above sharp-interface model~\cite{Bao18a}, the total volume of the thin film is conserved and the total free energy satisfies the following dissipation law
\begin{equation}
\frac{d}{dt}W(t) = -\int_{S(t)}|\nabla_{_S}\mu|^2\;dS -\eta\int_{\Gamma(t)}\Bigl(\vec c_{_\Gamma}^\gamma\cdot\vec n_{_\Gamma} - \sigma\Bigr)^2\;d\Gamma \leq 0, \qquad t\ge0.
\end{equation}

\subsection{The variational formulation}

Let $S:=S(t)\in C^2(U)$ be a smooth surface with smooth boundary $\Gamma:=\Gamma(t)$ , and assume that $f\in C(\bar{S})$. Denote the surface gradient operator as $\nabla_{_S}:=(\underline{D}_1,~\underline{D}_2,~\underline{D}_3)^T$, then the integration by parts on an open smooth surface $S$ with smooth boundary $\Gamma$ can be written as~\cite{Bao18a,Dziuk13}
\begin{equation}
\label{eqn:integrationbyparts}
\int_S\underline{D}_i f\;dS = \int_S f\mathcal{H}\; n_i\;dS + \int_\Gamma f c_{_\Gamma,_i}\;d\Gamma,
\end{equation}
where $\mathcal{H}=\nabla_{_S}\cdot\vec n$ is the mean curvature of the surface $S$ and $\vec c_{_\Gamma}=(c_{_\Gamma,_1},~c_{_\Gamma,_2},~c_{_\Gamma,_3})^T$ is the co-normal vector defined above. Following the above formula, we can naturally define the derivative $\nabla_{_S} f$ in the weak sense. Then, we can define the functional space $L^2(S)$ as
\begin{equation}
L^2(S):= \Bigl\{f: S\rightarrow\mathbb{R},\; {\rm and}\;\norm{f}_{L^2(S)}=\Bigl(\int_S f^2\,dS\,\Bigr)^{\frac{1}{2}}<+\infty\Bigl\},
\end{equation}
equipped with the $L^2$ inner product for any scalar or vector-valued functions $f_1,\,f_2$ defined over the surface $S$ as follows
\begin{equation}
\big<f_1,~f_2\big>_S := \int_S f_1\cdot f_2 \;dS.
\end{equation}
The Sobolev space $H^1(S)$ can be naturally defined as
\begin{equation}
H^1(S):= \Bigl\{f: S\rightarrow\mathbb{R},\;f\in L^2(S),\; \underline{D}_if\in L^2(S),\quad\forall 1\leq i\leq 3\Bigr\},
\end{equation}
equipped with the norm $\norm{f}_{H^1(S)}:=\Bigl(\norm{f}_{L^2(S)}^2 + \norm{\nabla_{_S}f}_{L^2(S)}^2\Bigr)^{\frac{1}{2}}$. Furthermore, if we denote $T_{_S}: H^1(S)\rightarrow L^2(\Gamma)$ as the trace operator, we can define the following functional space with the homogeneous Dirichlet boundary condition:
\begin{equation}
H_0^1(S):=\Bigl\{f: f\in H^1(S),\quad T_{_S}f = 0\Bigr\}.
\end{equation}

Therefore, we can define the following functional space which will be used for the solution of the solid-state dewetting problem as
\begin{equation}
H_{\alpha}^1(U):=\Bigl\{\varphi\in H^1(U),\;\varphi\Big|_{\partial U}= g\Bigr\},
\end{equation}
where the function $g\in L^2(\partial U)$ is given.
From these definitions, it should be noted that $H_0^1(U)$ denotes the functions in $H^1(U)$ with trace being zeros.

We now propose the following variational formulation for the sharp-interface model~\eqref{eqn:model1}-\eqref{eqn:model2} with the boundary conditions~\eqref{eqn:boundcon1}-\eqref{eqn:boundcon3} as: given the initial surface $S_0:=\vec X_0$ with its boundary $\Gamma_0$ defined in~\eqref{eqn:initialcondition}, find its evolution surfaces $S(t):=\vec X(\cdot,t)\in H_\alpha^1(U)\times H_\beta^1(U)\times H_0^1(U)$, and the chemical potential $\mu(\cdot,t)\in H^1(S)$ such that
\begin{subequations}
\begin{align}
\label{eqn:aniweakform1}
&\big<\partial_t\vec X\cdot\vec n,~\varphi\big>_S+ \big<\nabla_{_S}\mu,~\nabla_{_S}\varphi\big>_S = 0,\qquad\forall\varphi\in H^1(S),\\[1em]
&\big<\mu,~\vec n\cdot\boldsymbol{\omega}\big>_S - \sum_{k=1}^3\big<\gamma(\vec n)\nabla_{_S}x_k,~\nabla_{_S}\omega_k\big>_S \nonumber\\
&\qquad\qquad\qquad+\;\sum_{k,l=1}^3\big<\xi_k\nabla_{_S}x_k,~n_l\nabla_{_S}\omega_l\big>_S=0,\qquad\forall\boldsymbol{\omega}\in\vec [H^1_0(S)]^3,
\label{eqn:aniweakform2}
\end{align}
\end{subequations}
where $\alpha,~\beta$ represents the $x,~y$-coordinates of the moving contact line at time $t$, i.e., $\alpha=x_{_\Gamma}(\cdot,t),\;\beta=y_{_\Gamma}(\cdot,t)$, and $\boldsymbol{\xi}(\vec n)=(\xi_1,~\xi_2,~\xi_3)^T$ represents the Cahn-Hoffman $\boldsymbol{\xi}$-vector associated with the surface energy density $\gamma(\vec n)$
(see the definition in~\eqref{eqn:xivector}). Here, $\Gamma(t)=\vec X_{_\Gamma}(t)=(x_{_\Gamma}(\cdot,t),~y_{_\Gamma}(\cdot,t),~0)^T$ is jointly determined by the relaxed angle boundary condition~\eqref{eqn:boundcon2} in the above weak formulation.

In the above weak formulation, \eqref{eqn:aniweakform1} can be obtained by reformulating~\eqref{eqn:model1} as $\partial_t\vec X\cdot\vec n =\Delta_{_S}\mu$, multiplying a scalar test function $\varphi\in H^1(S)$, integrating over $S(t)$, integration by parts and noting the zero-mass flux boundary condition~\eqref{eqn:boundcon3}; Similarly,  by multiplying $n_l$ to Eq.~\eqref{eqn:model2}, we obtain the following equation~\cite{Deckelnick05}
\begin{equation}
\mu\,n_{l} = (\nabla_{_S}\cdot\boldsymbol{\xi})\,n_{l} =  \underline{D}_k(\xi_k\,n_l) - \underline{D}_k(\gamma(\vec n)\underline{D}_k\,x_l) -\gamma(\vec n)\mathcal{H}\,n_l,\quad l=1,2,3,
\label{eqn:vamedit1}
\end{equation}
where summation over $k$ is from $1$ to $3$.
By multiplying~\eqref{eqn:vamedit1} with $\omega_l$ on both sides, summation over $l=1,2,3$,
integrating over $S$ and integration by parts, we can obtain ~\eqref{eqn:aniweakform2}. For more details, please refer to~\cite{Deckelnick05}. We note that~\eqref{eqn:aniweakform2} has also been used in some works related to anisotropic geometric evolution equations~\cite{Barrett08Ani,Pozzi08,Hildebrandt12}.

In the isotropic case, i.e., $\gamma(\vec n)\equiv 1$, we have $\boldsymbol{\xi}(\vec n)=\vec n$. By using the fact that $\underline{D}_kx_l = \delta_{kl} - n_k\,n_l$,  we can obtain $\sum_{k,l=1}^3\big<n_k\nabla_{_S}x_k,~n_l\nabla_{_S}\omega_l\big>_S=0$.
%the following for the third term in Eq.~\eqref{eqn:aniweakform2} as
%\begin{eqnarray}
%&&\sum_{l,k=1}^3n_k\,n_l(\nabla_{_S}x_k\cdot\nabla_{_S}\omega_l) = \sum_{l,k,g=1}^3n_k\,n_l\underline{D}_gx_k\underline{D}_g\omega_l\nonumber\\
%&&=\sum_{l,k,g=1}^3n_k\,n_l(\delta_{gk} - n_g\,n_k)\underline{D}_g\omega_l\nonumber\\
%&&=\sum_{l,k=1}^3n_k\,n_l\underline{D}_k\omega_l - \sum_{l,k,g=1}^3n_k\,n_l\,n_g\,n_k\underline{D}_g\omega_l=0.
%\end{eqnarray}
Therefore, Eq.~\eqref{eqn:aniweakform2} will reduce to the variational formulation of the curvature term related to the Laplace-Beltrami operator~\cite{Bansch05,Barrett08JCP}
\begin{equation}
\big<\mu,~\vec n\cdot\boldsymbol{\omega}\big>_S - \sum_{k=1}^3\big<\nabla_{_S}x_k,~\nabla_{_S}\omega_k\big>_S=0,\quad\forall\boldsymbol{\omega}\in\vec (H^1_0(S))^3.
\label{eqn:isoweakform}
\end{equation}
In general, it is not easy to obtain the energy stability based on the discretization of the variational formulation defined in~\eqref{eqn:aniweakform1}-\eqref{eqn:aniweakform2}.
Specifically, in the isotropic case, the stability bound for the discretization of \eqref{eqn:isoweakform} has been established for the evolution of a closed surface~\cite{Bansch05, Barrett08JCP}. Based on our numerical experiments, the variational formulation defined in~\eqref{eqn:aniweakform1}-\eqref{eqn:aniweakform2} and its PFEM
perform very well in terms of stability, efficiency and accuracy in practical computations.

\section{The parametric finite element approximation}
In this section, based on the variational formulation \eqref{eqn:aniweakform1}-\eqref{eqn:aniweakform2},  we discretize the problem via a semi-implicit parametric finite element method, and prove the well-posedness of the discrete scheme.

To present the PFEM for the variational problem, we first take the time step as
$0=t_0<t_1<t_2<\cdots<t_{M}$,
and denote time steps as $\tau_m=t_{m+1}-t_{m}$ for $0\leq m\leq M-1$. In the spatial level, we assume that the evolution surfaces $\{S(t_m)\}_{m=1}^M$ are discretized by polygonal surfaces $\{S^m\}_{m=1}^M$ such that
\begin{equation}
S^m=\bigcup_{j=1}^N \bar{D}_j^m,\quad{\rm where}\quad\{D_j^m\}_{j=1}^N\quad{\text {are mutually disjoint triangles}}.
\end{equation}
Here, we assume that the discrete surface $S^m$ has $K$ different vertices (labeled as $\{\vec q_k^m\}_{k=1}^K$), and the boundary of $S^m$ is a closed polygonal curve $\Gamma^m = \bigcup_{j=1}^{N_c}\bar{h}_j^m$,
where $\{h_j^m\}_{j=1}^{N_c}$ are a sequence of connected line segments which is positively oriented,
i.e., if you walk along the direction of the oriented boundary, the surface is at your left side.
Moreover, we have the following assumption about the polygonal surface at each time step such that
\begin{equation}
|D_j^m|>0,\quad 1\leq j\leq N,\quad 0\leq m\leq M-1,
\label{eqn:assumption}
\end{equation}
which ensures that vertices of polygonal surface will not merge during the evolution.

We can define the following finite dimensional spaces over $\Gamma^m$ and $S^m$ as
\begin{subequations}
\begin{align}
& V^h(\Gamma^m):=\Bigl\{\varphi\in C(\Gamma^m,~\mathbb{R}):\;\varphi\Big|_{h_j^m}\in \boldsymbol{P}_1,\;\forall 1\leq j\leq N_c\Bigr\}\subset H^1(\Gamma^m),\\
& V^h(S^m):=\Bigl\{\varphi\in C(S^m,~\mathbb{R}):\;\varphi\Big|_{D_j^m}\in\boldsymbol{P}_1,\;\forall 1\leq j\leq N\Bigr\}\subset H^1(S^m),
\end{align}
\end{subequations}
where $\boldsymbol{P}_1$ denotes all polynomials with degrees at most $1$, which yields piecewise linear functions on each element. If $g\in V^h(\Gamma^m)$, we can define the finite element space on $S^m$ with boundary value given by a function $g$ as
\begin{equation}
\mathcal{V}_g^h(S^m):=\Bigl\{\varphi\in V^h(S^m):\;\varphi\Big|_{\Gamma^m}=g\Bigr\}.
\end{equation}
Again, for simplicity of notations, we denote $\mathcal{V}_0^h$ as space $V^h(S^m)$ with zero values on the boundary $\Gamma^m$.

Now, we can define the following mass-lumped inner product to approximate the integration on $S^m$ as
\begin{eqnarray}
\label{eqn:masslumpnorm}
 \big<f_1,~f_2\big>_{m}^h = \frac{1}{3}\sum_{j=1}^N |D_j^m|\sum_{k=1}^3 f_1\left((\vec q_{j_{_k}}^m)^-\right)\cdot f_2\left((\vec q_{j_{_k}}^m)^-\right),
 \end{eqnarray}
where $|D_j^m|$ is the area of the triangle $D_j^m$, and $f_1,~f_2$ are two scalar or vector functions defined on $S^m$ with possible jumps across each edge of the triangle in 3D. We define the one-sided limit $f_1((\vec q_{j_{_k}}^m)^-)$ as the limit of $f_1(\vec x)$ when $\vec x$ approaches towards $\vec q_{j_{_k}}^m$ from the triangle surface $D_j^m$, i.e., $f_1((\vec q_{j_{_k}}^m)^-)=\lim\limits_{D_j^m\ni\vec x\rightarrow\vec q_{j_{_k}}^m }f_1(\vec x)$.

We assume that $\{\vec q_{j_1}^m,~\vec q_{j_2}^m,~\vec q_{j_3}^m\}$ are the three vertices of the triangle surface $D_j^m$ and ordered in the anti-clockwise direction when viewing from top to bottom. It should be noted that the normal vector $\vec n^m=(n^{m}_1,~n^{m}_2,~n^{m}_3)^T$ of the surface $S^m$ is a step function with discontinuities across the edges of each triangle surface. Let ${\vec n_j^m}$ be the unit normal vector on $D_j^m$, and we can numerically evaluate it as
\begin{equation}
\vec n_j^m:=\vec n^m\Big|_{D_j^m}= \frac{(\vec q_{j_2}^m - \vec q_{j_1}^m)\times (\vec q_{j_3}^m-\vec q_{j_1}^m)}{|(\vec q_{j_2}^m - \vec q_{j_1}^m)\times (\vec q_{j_3}^m-\vec q_{j_1}^m)|},\quad\forall 1\leq j\leq N.
\label{eqn:NumericalNormal}
\end{equation}
\begin{figure}[!htp]
\centering
\includegraphics[width=0.90\textwidth]{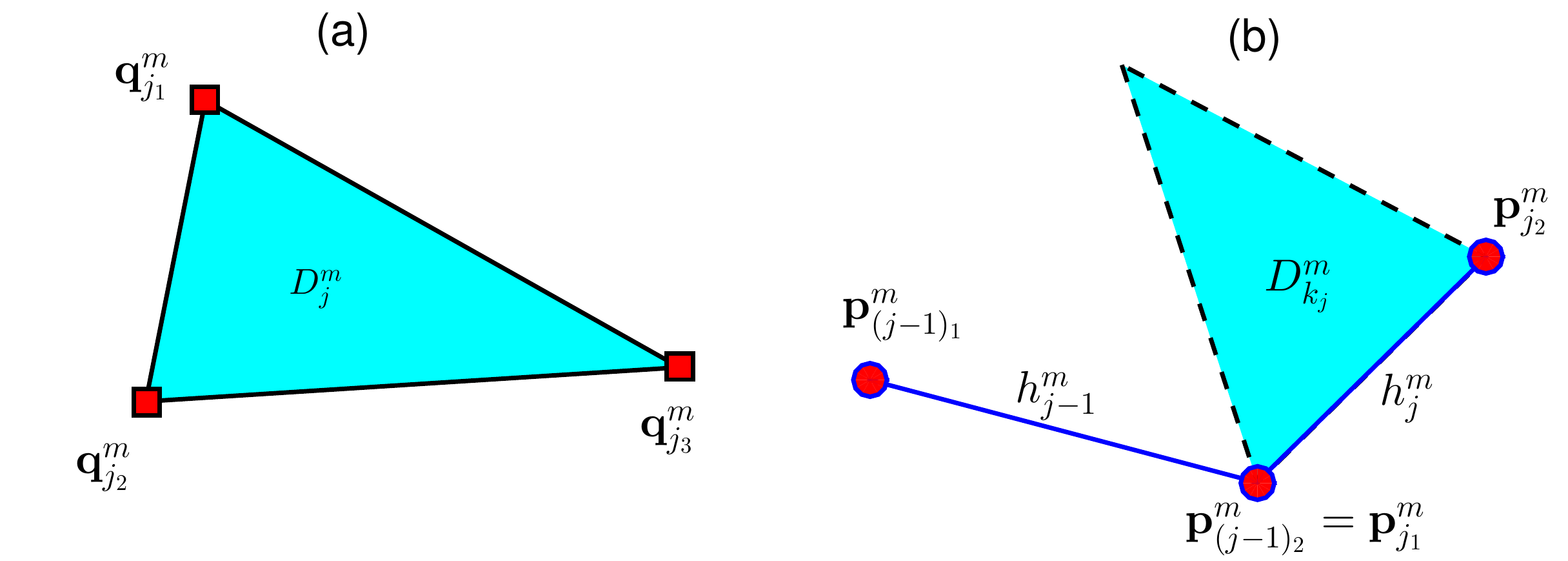}
\caption{A schematic illustration of surface triangle mesh when viewing from top to down: (a) a triangle mesh with no edges on the boundary; (b) a triangle mesh with an edge (shown in blue) on the boundary.}
\label{fig:Mesh}
\end{figure}
For the discrete boundary curve $\Gamma^m$, it is a closed plane curve and consists of a sequence of connected line segments on the substrate ($Oxy$-plane). We assume that $\{\vec p_{j_1}^m,~\vec p_{j_2}^m\}$ are the two vertices of a line segment $h_j^m$ which are ordered according to the orientation of the curve. Let $\vec n_{_{\Gamma}}^m$ denote the unit normal vector of the boundary curve $\Gamma^m$ along the substrate, then $\vec n_{_{\Gamma}}^m$ is also a step function with discontinuities across the vertices of each line segment. Let $\vec n_{_{\Gamma,j}}^m$ represent the unit normal vector of $\Gamma^m$ on the line segment $h_j^m$, then
\begin{equation}
\vec n_{_{\Gamma,j}}^m =\vec n_{_{\Gamma}}^m\Big|_{h_j^m}= \frac{(\vec p_{j_2}^m-\vec p_{j_1}^m)\times \vec e_3}{|(\vec p_{j_2}^m-\vec p_{j_1}^m)\times \vec e_3|},\quad\forall 1\leq j\leq N_c,
\label{eqn:NumericalGammaNormal}
\end{equation}
where the unit vector $\vec e_3=(0,~0,~1)^T$. Similarly, $\vec c_{_{\Gamma}}^m$ is the unit co-normal vector defined on the polygonal curve $\Gamma^m$ along the substrate, and it is also a step function which can be numerically evaluated as
\begin{equation}
\vec c_{_{\Gamma,j}}^m=\vec c_{_{\Gamma}}^m\Big|_{h_j^m} = \frac{(\vec p_{j_2}^m-\vec p_{j_1}^m)\times \vec n_{k_j}^m}{|(\vec p_{j_2}^m-\vec p_{j_1}^m)\times\vec n_{k_j}^m|},\quad\forall 1\leq j\leq N_c,
\label{eqn:NumericalCoNormal}
\end{equation}
where $\vec n_{k_j}^m$ is the unit outer normal vector of the triangle surface $D_{k_j}^m$ which contains the line segment $h_j^m$ (as shown in Fig.~\ref{fig:Mesh}(b)).

Let $S^m:=\vec X^m$ and $\Gamma^m:=\vec X_{_{\Gamma}}^m=(x_{_\Gamma}^{m},~y_{_\Gamma}^{m},~0)^T$ be the numerical approximations of the moving surface $S(t_m):=\vec X(\cdot,~t_m)$ and its boundary line $\Gamma(t_m):=\vec X_{_\Gamma} (\cdot,~t_m)$, respectively. Take $S^0=\vec X^0 \in\mathcal{V}^h_{\alpha_0}(S^m)\times\mathcal{V}^h_{\beta_0}(S^m)\times\mathcal{V}^h_0(S^m)$ with $\alpha_0,~\beta_0\in V^h(\Gamma^0)$ as the numerical approximations of $x_{_\Gamma}(\cdot,0),~y_{_\Gamma}(\cdot,0)$, respectively.

Then, a semi-implicit parametric finite element method for the
variational problem \eqref{eqn:aniweakform1}-\eqref{eqn:aniweakform2} can be stated as: given $S^0=\bigcup_{j=1}^N\bar{D}_j^0$ which is an initial polygonal surface and its boundary curve $\Gamma^0=\bigcup_{j=1}^{N_c}\bar{h}_j^0$, for $m\geq 0$, find a sequence of polygonal surfaces $S^{m+1}:=\vec X^{m+1}\in\mathcal{V}^h_{\alpha}(S^m)\times\mathcal{V}^h_{\beta}(S^m)\times\mathcal{V}_0^h(S^m)$, and chemical potentials $\mu^{m+1}\in V^h(S^m)$ such that
\begin{subequations}
\begin{align}
\label{eqn:fullanisoform1}
&\Big<\frac{\vec X^{m+1}-\vec X^m}{\tau_m},~\varphi_h\vec n^m\Big>_{m}^h + \big<\nabla_{_{S}}\mu^{m+1},~\nabla_{_{S}}\varphi_h\big>_{m}^h = 0,\qquad\forall \varphi_h\in V^h(S^m),\\
&\big<\mu^{m+1},~\vec n^m\cdot\boldsymbol{\omega}_h\big>_{m}^h - \sum_{l=1}^3\big<\gamma^m\nabla_{_{S}} x_l^{m+1},\nabla_{_{S}}\omega_{h,l}\big>_{m}^h =\mathcal{G}^m,\;\forall \boldsymbol{\omega}_h\in [\mathcal{V}_0^h(S^m)]^3,
\label{eqn:fullanisoform2}
\end{align}
\end{subequations}
where $\gamma^m$ and $\mathcal{G}^m$ are explicitly calculated as
\begin{equation}
\gamma^m=\gamma(\vec n^m),\qquad\mathcal{G}^m=-\sum_{k,l=1}^3\big<\xi^{m}_k\nabla_{_S}x^{m}_k,~n^{m}_l\,\nabla_{_S}\omega_{h,l}\big>_{m}^h,
\end{equation}
with $\boldsymbol{\xi}^m=\boldsymbol{\xi}(\vec n^m)=(\xi^{m}_1,~\xi^{m}_2,~\xi^{m}_3)^T$, $\boldsymbol{\omega_h}=(\omega_{h,1},~\omega_{h,2},~\omega_{h,3})^T$, and $\alpha,~\beta$ are the $x,~y$-coordinates of the contact line $\Gamma^{m+1}$, i.e., $\alpha:=x_{_\Gamma}^{{m+1}},~\beta :=y_{_\Gamma}^{{m+1}}$.

We note here that the boundary curve $\Gamma^{m+1}$ is first updated from $\Gamma^m$ by explicitly solving
the relaxed contact angle condition defined in Eq.~\eqref{eqn:boundcon2}, and then by using $\Gamma^{m+1}$ as the Dirichlet boundary condition, we solve the above PFEM to obtain the new polygonal surface $S^{m+1}$. More precisely, the algorithm for updating $\Gamma^{m+1}$ can be described as (shown in Fig.~\ref{fig:updateGamma}):
\begin{itemize}
\item Calculate $\vec n^m_{k_j}$, $\vec n_{_{\Gamma,j}}^m$ and $\vec c_{_{\Gamma,j}}^{m}$ via \eqref{eqn:NumericalNormal}, \eqref{eqn:NumericalGammaNormal} and \eqref{eqn:NumericalCoNormal}, and then by
    using forward Euler scheme to approximate the relaxed contact angle condition, we can obtain
$\lambda_j^m$ and $\vec V_j^m$ for each line segment $h_j^m$ as
\begin{equation*}
\lambda_j^m := -\tau_m\,\eta\, (\vec c_{_{\Gamma,j}}^{\gamma,m}\cdot\vec n_{_{\Gamma,j}}^m-\sigma),\qquad \vec V_j^m :=\lambda_j^m\,\vec n_{_\Gamma,_j}^m,\quad 1\leq j\leq N_c,
\label{eqn:Gammavelocity}
\end{equation*}
where $\vec c_{_{\Gamma,j}}^{\gamma,m}:=(\boldsymbol{\xi}(\vec n^{m}_{k_j})\cdot\vec n^m_{k_j})\;\vec c_{_{\Gamma,j}}^m -(\boldsymbol{\xi}(\vec n^m_{k_j})\cdot\vec c_{_{\Gamma,j}}^m)\;\vec n^m_{k_j}$;

\item If $\vec n_{_\Gamma,_{j-1}}^m\parallel\vec n_{_\Gamma,_j}^m$, we update the segmentation point $\vec p_{j_1}^m$ by moving along the displacement vector $\frac{1}{2}(\vec V_{j-1}^m + \vec V_j^m)$;

\item If $\vec n_{_\Gamma,_{j-1}}^m\nparallel\vec n_{_\Gamma,_j}^m$, we first move each line segment $h^m_j$ along its
normal direction by an increment vector $\vec V_j^m=\lambda_j^m\,\vec n_{_\Gamma,_j}^m$, then calculate the intersection point of the updated adjacent edges, and take it as the new segmentation point.

\end{itemize}

In summary, the new segmentation point $\vec p_{j_1}^{m+1}$ can be updated as the following formula
\begin{equation}{\vec p_{j_1}^{m+1}=}
\begin{cases}
\vec p_{j_1}^m + \frac{1}{2}(\vec V_{j-1}^m + \vec V_j^m),\quad\rm{if}\;\vec n_{_\Gamma,_{j-1}}^m\parallel\vec n_{_\Gamma,_j}^m,\cr\\[0.1em]
\vec p_{j_1}^m + \frac{\lambda_{j-1}^m - \lambda_j^m\,R_j^m}{1-|R_j^m|^2}\vec n_{_\Gamma,_{j-1}}^m + \frac{\lambda_{j}^m - \lambda_{j-1}^m\,R_j^m}{1-|R_j^m|^2}\vec n_{_\Gamma,_j}^m,\quad\rm{if}\;\vec n_{_\Gamma,_{j-1}}^m\nparallel\vec n_{_\Gamma,_j}^m,
\end{cases}
\end{equation}
where $R_j^m=\vec n_{_\Gamma,_{j-1}}^m\cdot\vec n_{_\Gamma,_j}^m$. By making use of
\begin{equation}
\Bigl(\vec p_{j_1}^{m+1} - \vec p_{j_1}^m\Bigr)\cdot\vec n_{_\Gamma,_{j-1}}^m  = \lambda_{j-1}^m,\qquad \Bigl(\vec p_{j_1}^{m+1} - \vec p_{j_1}^m\Bigr)\cdot \vec n_{_\Gamma,_{j}}^m = \lambda_{j}^m,
\end{equation}
it is easy to obtain the above formula.

\begin{figure}[!htp]
\centering
\includegraphics[width=0.75\textwidth]{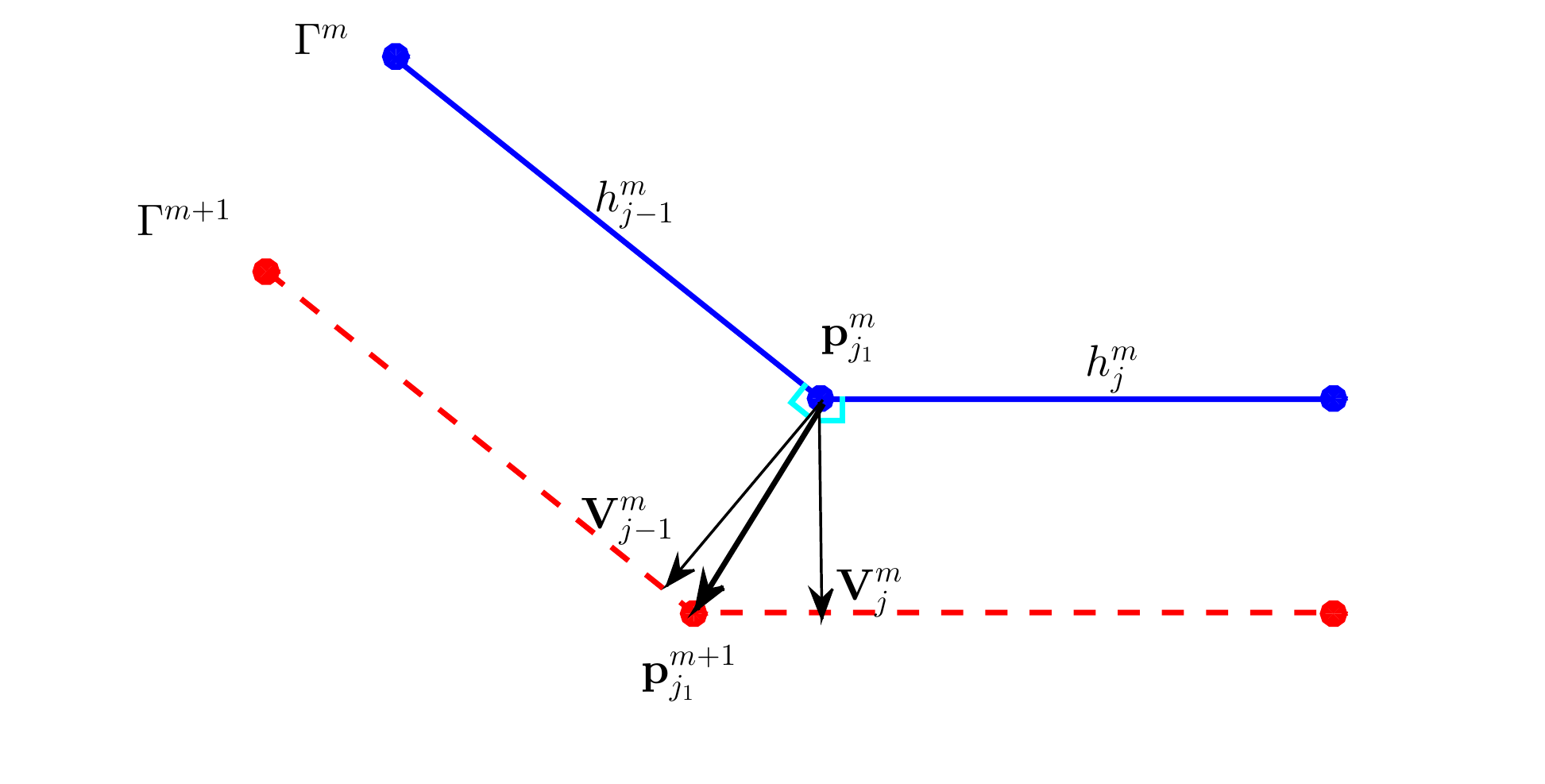}
\caption{The boundary curve is determined by a polygonal line, which can be updated in the following two steps: (1) shift each line segment $h^m_j$ of the curve $\Gamma^m$ along its normal direction by a displacement vector $\vec V_j^m=\lambda_j^m\,\vec n_{_\Gamma,_j}^m$ via the relaxed contact angle condition; (2) calculate the intersection point of the updated adjacent edges, and take it as the segmentation point of the polygonal line $\Gamma^{m+1}$. Specially, if
$\vec n_{_\Gamma,_{j-1}}^m\sslash\vec n_{_\Gamma,_j}^m$, we move the point $\vec p_{j_1}^m$ by a displacement
vector $\frac{1}{2}(\vec V_{j-1}^m + \vec V_j^m)$.}
\label{fig:updateGamma}
\end{figure}

We remark that the above discrete problem results in a linear algebra system which can be efficiently solved via the sparse LU decomposition or GMRES method. Moreover, we have the following theorem for the well-posedness of the proposed discrete scheme.
\begin{thm}[Well-posedness of the PFEM] {\label{thm:AnisoPFEMwellpose}}The above discrete variational problem~\eqref{eqn:fullanisoform1}-\eqref{eqn:fullanisoform2} admits a unique solution (i.e., it is well-posed).
\end{thm}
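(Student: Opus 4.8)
The plan is to exploit the fact that, once $\Gamma^{m+1}$ (and hence the Dirichlet data $\alpha,\beta$) has been fixed by the explicit update, the scheme \eqref{eqn:fullanisoform1}-\eqref{eqn:fullanisoform2} is a \emph{square} linear system for the nodal values of $(\vec X^{m+1},\mu^{m+1})$. Indeed, \eqref{eqn:fullanisoform1} supplies one equation per vertex of $S^m$, matching the nodal unknowns of $\mu^{m+1}\in V^h(S^m)$, while \eqref{eqn:fullanisoform2} supplies three equations per interior vertex, matching the free components of $\vec X^{m+1}$ (whose boundary values are prescribed). For a square finite-dimensional system, existence and uniqueness are equivalent, so it suffices to show that the associated homogeneous problem—set $\vec X^m=0$, $\mathcal G^m=0$, and homogenize the boundary data so that $\vec X^{m+1}=\vec X\in[\mathcal V_0^h(S^m)]^3$—admits only $\vec X=0$, $\mu=0$.

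First I would run the standard energy test. Choosing $\varphi_h=\mu$ in the homogenized \eqref{eqn:fullanisoform1} and $\boldsymbol\omega_h=\vec X$ in the homogenized \eqref{eqn:fullanisoform2} (now admissible, since $\vec X\in[\mathcal V_0^h(S^m)]^3$), and using the symmetry of the mass-lumped inner product to identify the common term $\big<\vec X\cdot\vec n^m,\mu\big>_m^h$, I would add the two identities to obtain
\[
\tau_m\,\big<\nabla_{_S}\mu,~\nabla_{_S}\mu\big>_m^h+\sum_{l=1}^3\big<\gamma^m\,\nabla_{_S}x_l,~\nabla_{_S}x_l\big>_m^h=0.
\]
Since $\nabla_{_S}$ of a piecewise linear function is piecewise constant, both mass-lumped quantities coincide with exact integrals of nonnegative piecewise-constant integrands; together with $\tau_m>0$ and the positivity $\gamma^m=\gamma(\vec n^m)>0$, each term must vanish. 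Hence $\nabla_{_S}\mu\equiv 0$ and $\nabla_{_S}x_l\equiv 0$ on every triangle, so $\mu$ and each $x_l$ are constant on each connected component of $S^m$.

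Next I would use the boundary condition: $\vec X\in[\mathcal V_0^h(S^m)]^3$ vanishes on $\Gamma^m$ and is constant on each component, and since in the dewetting geometry every component of $S^m$ is attached to the substrate along $\Gamma^m$, I conclude $\vec X\equiv 0$. It then remains to pin down the additive constant of $\mu$ on each component. Feeding $\vec X=0$ back into \eqref{eqn:fullanisoform2} leaves $\big<\mu,~\vec n^m\cdot\boldsymbol\omega_h\big>_m^h=0$ for all $\boldsymbol\omega_h\in[\mathcal V_0^h(S^m)]^3$. Testing with $\boldsymbol\omega_h=\phi_k\,\vec e$, where $\phi_k$ is the hat function at an interior vertex $\vec q_k$ and $\vec e$ is arbitrary, the mass-lumped inner product collapses to $\tfrac13\,\mu(\vec q_k)\,\vec e\cdot\vec N_k$ with $\vec N_k:=\sum_{D_j^m\ni\vec q_k}|D_j^m|\,\vec n_j^m$, whence $\mu(\vec q_k)\,\vec N_k=\vec 0$.

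The crux—and the step I expect to be the main obstacle—is passing from $\mu(\vec q_k)\,\vec N_k=\vec 0$ to $\mu\equiv 0$. This requires the area-weighted vertex normal $\vec N_k$ to be nonzero at (at least) one interior vertex of each component; given that, $\mu(\vec q_k)=0$ forces the component's constant to vanish. This is a genuinely three-dimensional, mesh-dependent non-degeneracy with no one-dimensional analogue, and it is precisely where assumption \eqref{eqn:assumption} must be combined with the fact that $S^m$ is an embedded polygonal surface possessing at least one interior vertex per component. I would argue that under these hypotheses $\vec N_k\neq\vec 0$ (a complete cancellation of all area-weighted normals around an embedded interior vertex being excluded), thereby establishing uniqueness and, by squareness, existence.
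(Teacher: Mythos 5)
Your proposal follows essentially the same route as the paper: reduce to the homogeneous square system, test with $\varphi_h=\mu$ and $\boldsymbol{\omega}_h=\vec X$ to obtain the energy identity, conclude $\vec X=0$ and $\mu$ constant, and then kill the constant by testing with a nodal hat function times a vertex normal. The non-degeneracy you flag as the crux---that the area-weighted vertex normal is nonzero at some interior vertex---is exactly the point the paper handles by choosing $\boldsymbol{\omega}^h=\vec g_j^m\phi_j^m$ with the weighted normal \eqref{eqn:wnormal} and appealing to assumption \eqref{eqn:assumption}; the paper asserts rather than proves this, so your treatment is no less complete than the original.
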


\begin{proof}
To prove the well-posedness of the PFEM scheme, we need to prove the linear system obtained from~\eqref{eqn:fullanisoform1}-\eqref{eqn:fullanisoform2} has a unique solution. By noting that the moving contact line $\Gamma^{m+1}$ is first updated via the relaxed angle boundary condition in the above PFEM, we can regard it as a Dirichlet type boundary condition for the variational problem \eqref{eqn:fullanisoform1}-\eqref{eqn:fullanisoform2}. It is equivalent to proving the corresponding homogenous linear system has only the zero solution.

Therefore, the well-posedness of the discrete problem~\eqref{eqn:fullanisoform1}-\eqref{eqn:fullanisoform2} is equivalent to that of the following homogeneous linear system: find $\{\vec X^{m+1},~\mu^{m+1}\}\in \{[\mathcal{V}_0^h(S^m)]^3,~V^h(S^m)\}$ such that
\begin{subequations}
\begin{align}
\label{eqn:fullisoform1}
&\big<\vec X^{m+1}\cdot\vec n^m,~\varphi_h\big>_{m}^h + \tau_m\big<\nabla_{_{S}}\mu^{m+1},~\nabla_{_{S}}\varphi_h\big>_{m}^h = 0,\,\forall\varphi_h\in V^h(S^m),\\
&\big<\mu^{m+1},~\vec n^m\cdot\boldsymbol{\omega}_h\big>_{m}^h - \big<\gamma(\vec n^m)\nabla_{_{S}}\vec X^{m+1},~\nabla_{_{S}}\boldsymbol{\omega}_h\big>_{m}^h =0,\,\forall\boldsymbol{\omega}_h\in (\mathcal{V}_0^h(S^m))^3.
\label{eqn:fullisoform2}
\end{align}
\end{subequations}
By choosing the test functions as $\varphi_h = \mu^{m+1},\boldsymbol{\omega}_h = \vec X^{m+1}$, we can immediately obtain
\begin{equation}
\tau_m\big<\nabla_{_{S}}\mu^{m+1},\nabla_{_{S}}\mu^{m+1}\big>_{m}^h + \big<\gamma(\vec n^m)\nabla_{_{S}}\vec X^{m+1},\nabla_{_{S}}\vec X^{m+1}\big>_{m}^h=0.
\label{eqn:immediate}
\end{equation}
By noting $\gamma(\vec n^m)>0$ for all $\vec n^m\in S^2$, we obtain directly $\vec X^{m+1}=\vec 0$ by using the zero boundary condition, and moreover, we have $\mu\equiv \mu^c$ (i.e., a constant). Furthermore, by substituting $\vec X^{m+1}=0$ into~\eqref{eqn:fullisoform2}, we have
\begin{equation}
\mu^c\big<\vec n^m,~\boldsymbol{\omega}^h\big>_{_{m}}^h = 0.
\end{equation}
By choosing $\boldsymbol{\omega}^h=\vec g_j^m\phi_j^m$ with the weighted normal vector $\vec g_j^m$ defined as
\begin{equation}
\vec g_j^m:=\frac{\sum_{D_k^m\in \mathcal{T}_j^m}|D_k^m|\vec n_k^m}{\sum_{D_k^m\in \mathcal{T}_j^m}|D_k^m|}, \quad {\rm with}\quad \mathcal{T}_j^m:=\{D_k^m:\vec q_j^m\in \bar{D}_k^m\},
\label{eqn:wnormal}
\end{equation}
and $\phi_j^m\in V^h(S^m)$ being the nodal basic function at point $\vec q_j^m$,  it immediately yields $\mu^c=0$ by noting the assumption~\eqref{eqn:assumption} and Eq.~\eqref{eqn:masslumpnorm}.

Therefore, the corresponding homogeneous linear system only has the zero solution, which indicates the existence and uniqueness of solution for our PFEM.
\end{proof}

The above proposed PFEM via the $\boldsymbol{\xi}$-vector formulation is an extension to 3D case based on our previous works in 2D~\cite{Jiang18}. The idea behind the variational formulation is by using the decomposition of the Cahn-Hoffman $\boldsymbol{\xi}$-vector into the normal and tangential components~\cite{Jiang18}. In the discrete scheme, the normal component is discretized implicitly, while the tangential components are explicitly discretized. During the practical computation, we need to redistribute mesh points uniformly in 2D according to the arc-length for the polygonal boundary line in each time step; similarly, we also use the mesh redistribution algorithm discussed in~\cite{Bansch05} to prevent the mesh distortion for
the triangular surface mesh.

Furthermore, since $S^{m+1}:=\vec X^{m+1}(S^m)$ is assumed to be parameterized over $S^m$, the operator $\nabla_{_{S}}$ can then be very easily numerically calculated. More precisely, consider the triangular surface $D_j^m$ with vertices $\{\vec q_{j_1}^m,~\vec q_{j_2}^m,~\vec q_{j_3}^m\}$ ordered in the anti-clockwise direction, we then have
\begin{equation}
\nabla_{_{S}} B_{j_1}(S^m)\Big|_{D_j^m} = \frac{(\vec q_{j_3}^m-\vec q_{j_2}^m)\times\vec n_j^m}{2|D_j^m|},
\end{equation}
where $B_{j_1}\in V^h(S^m)$ is the nodal basis function defined at point $\vec q_{j_1}^m$. Similarly, we can easily obtain $\nabla_{_S}B_{j_2}$ and $\nabla_{_S}B_{j_3}$. Therefore, for any piecewise linear function $\phi\in V^h(S^m)$, we can have
\begin{equation}
 \nabla_{_{S}}\phi\Big|_{D_j^m} = \sum_{i=1}^3\phi(\vec q_{j_i}^m)\nabla_{_{S}}B_{j_i}.
\end{equation}

\section{Numerical results}

In this section, we implement the proposed PFEM, show some equilibrium convergence results, and perform lots of numerical simulations to demonstrate the efficiency and accuracy of the proposed scheme. In the following simulations, we use the uniform
time step, i.e., $\tau=\tau_m$, $m=0,1,2,\ldots$.

\subsection{Equilibrium convergence}

The mathematical description of the equilibrium shape has been fully investigated in~\cite{Bao18a}. Here, we present some numerical equilibrium convergence results by solving the kinetic sharp-interface model via the proposed PFEM scheme.

From the relaxed contact angle boundary condition~\eqref{eqn:boundcon1}, we know that the contact line mobility $\eta$ precisely controls the relaxation rate of the contact angle towards its equilibrium state. The large $\eta$ will accelerate the relaxation process~\cite{Wang15,Jiang12,Huang19b}. Here, we numerically investigate the effect of $\eta$ on the evolution of the dynamic contact angles.  We numerically define the following average contact angle $\bar{\theta}^m$ as the indicator,
\begin{equation}
\bar{\theta}^m = \frac{1}{N_c}\sum_{j=1}^{N_c} \arccos(\vec c^m_{_{\Gamma,j}}\cdot\vec n^m_{_{\Gamma,j}}),
\label{eqn:meantheta}
\end{equation}
where $\vec n^m_{_{\Gamma,j}}$ and $\vec c^m_{_{\Gamma,j}}$ are the unit normal and co-normal vectors defined on the $j$-th line segment $h_j^m$ of the boundary curve $\Gamma^m$.
\begin{figure}[!htp]
\centering
\includegraphics[width=0.95\textwidth]{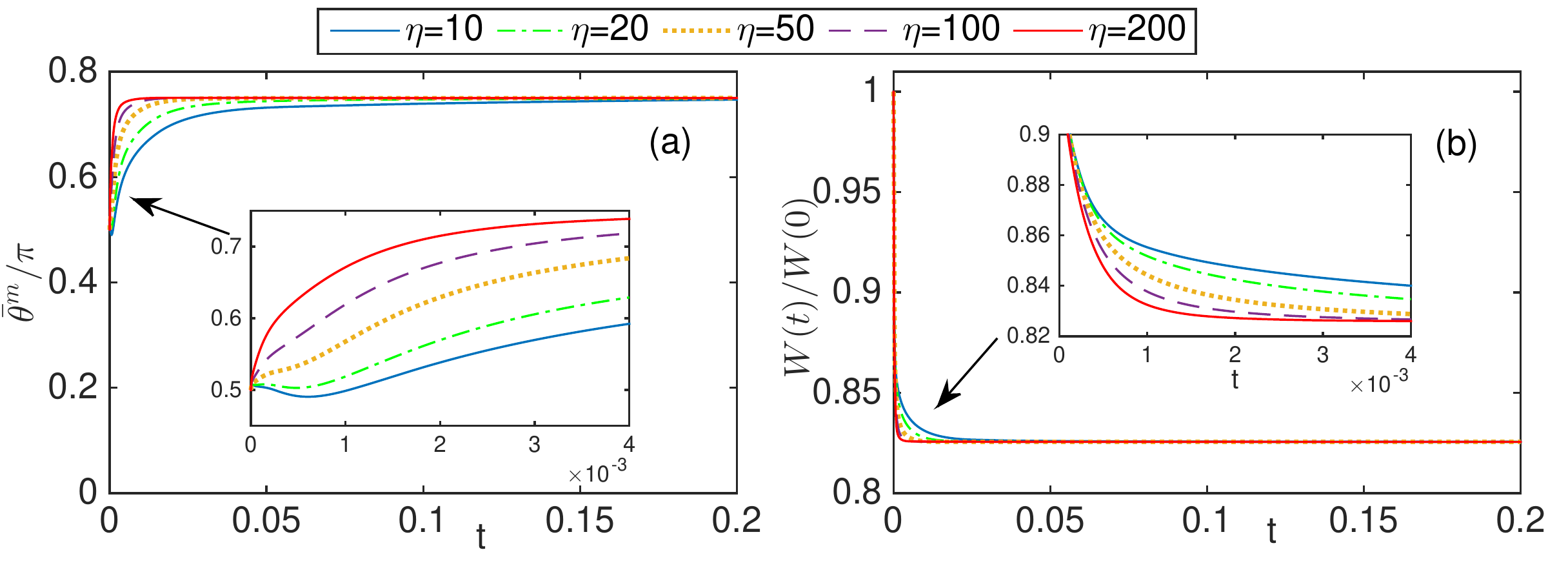}
\caption{(a) The temporal evolution of the average contact angle $\bar{\theta}^m$ defined in \eqref{eqn:meantheta}; (b) the temporal evolution of the normalized energy $W(t)/W(0)$ for different choices of mobility, where the initial shape of the island film with isotropic surface energy is chosen as a unit cube, and the computational parameters are chosen as $\sigma=\cos(3\pi/4)$.}
\label{fig:smallisland111}
\end{figure}

Fig.~\ref{fig:smallisland111} shows the temporal evolution of $\bar{\theta}^m$ and the normalized energy $W(t)/W(0)$ under different choices of the contact line mobility $\eta$. The initial shape of the island film is chosen as a unit cube, and $\sigma=\cos(3\pi/4)$. From the figure, we can observe that the larger mobility $\eta$ will accelerate the process of relaxation such that the contact angles evolve faster towards its equilibrium contact angle $3\pi/4$. As shown in Fig.~\ref{fig:smallisland111}, the energy decays faster for larger mobility, but finally, it converges to the same equilibrium state. It indicates that the equilibrium contact angle, as well as the equilibrium shape, is independent of the choice of the contact line mobility $\eta$. Meanwhile, the total volume loss (not shown here) of the island film is always below 0.5\% during the numerical simulations. In the following numerical simulations, the contact line mobility is chosen to be very large (e.g., $\eta=100$). This choice of $\eta$ will result in a very quick convergence to the equilibrium contact angle (defined in \eqref{eqn:staticangle}). The detailed investigation of the influence of the parameter $\eta$ on the solid-state dewetting evolution process and equilibrium shapes was performed in 2D~\cite{Wang15}.

\begin{figure}[!htp]
\centering
\includegraphics[width=0.75\textwidth]{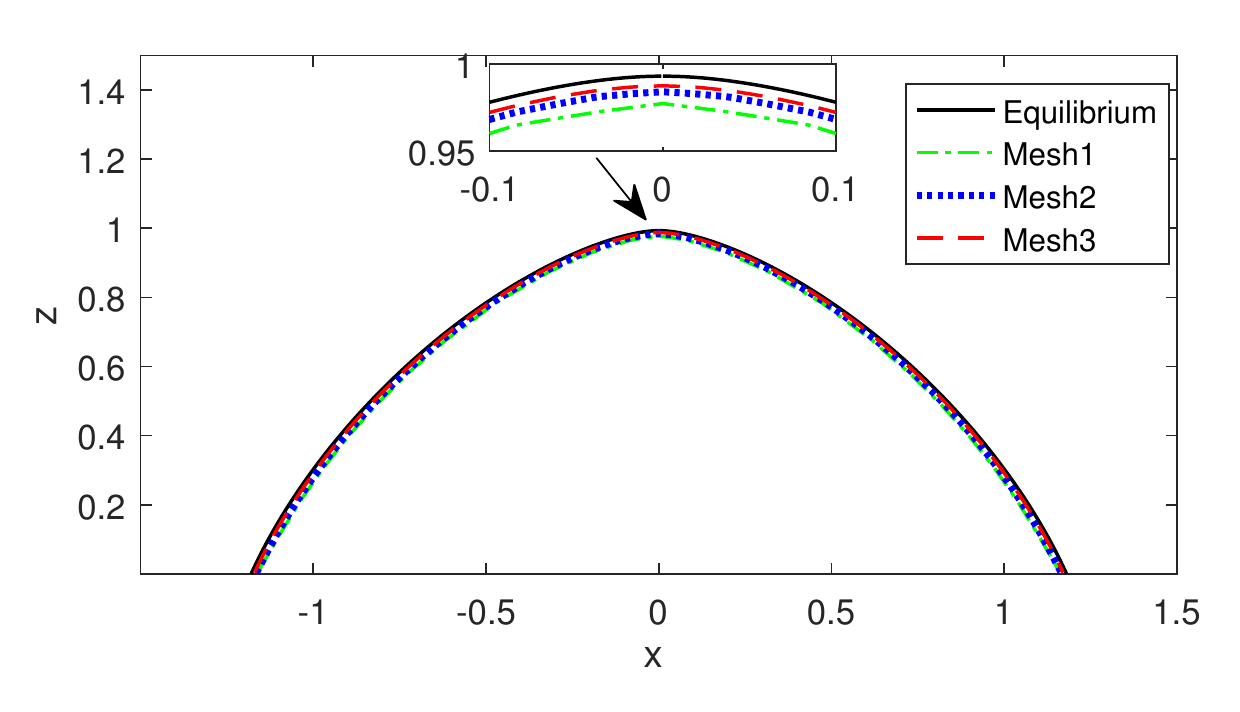}
\caption{Comparisons of the cross-section profiles along the $x$-direction of the numerical equilibrium shapes under different meshes with its theoretical equilibrium shape, where the initial shape is chosen as a $(1,2,1)$ cuboid, the surface energy $\gamma(\vec n)=1+0.25(n_1^4+n_2^4+n_3^4)$, and $\sigma=\cos(15\pi/36)$. The theoretical equilibrium shape (black line) is constructed by the Winterbottom construction~\cite{Winterbottom67,Bao17b}.}
\label{fig:Equcon}
\end{figure}

We next show a convergence result between the numerical equilibrium shapes by solving the proposed sharp-interface model
and its theoretical equilibrium shape. Fig.~\ref{fig:Equcon} depicts equilibrium shapes under different mesh sizes, where $\sigma=\cos(15\pi/36)$, $\gamma(\vec n)=1+0.25(n_1^4+n_2^4+n_3^4)$. The initial shape is
chosen as a $(1,2,1)$ cuboid, then we numerically evolve it until the equilibrium state by using different meshes, which are given by a set of small isosceles right triangles. If we define the mesh size indicator $h$ as the length of the hypotenuse of the isosceles right triangle, then ``Mesh $1$'' represents the initial mesh with $h=h_0=0.125$, and the time step is chosen as $\tau =\tau_0=0.00125$ for numerical computation. Meanwhile, the time step for ``Mesh $2$'' ($h=h_0/2$) and ``Mesh $3$'' ($h=h_0/4$) are chosen as $\tau = \tau_0/4$ and $\tau=\tau_0/16$, respectively. For a better comparison, we plot the cross-section profiles along the
$x$-direction for the numerical equilibrium shapes and the theoretical equilibrium shape. As shown in Fig.~\ref{fig:Equcon}, we can clearly observe that as the computational mesh size gradually decreases, the numerical equilibrium shapes uniformly converge to the theoretical equilibrium shape.

\subsection{For isotropic case}
\begin{figure}[!htp]
\centering
\includegraphics[width=0.98\textwidth]{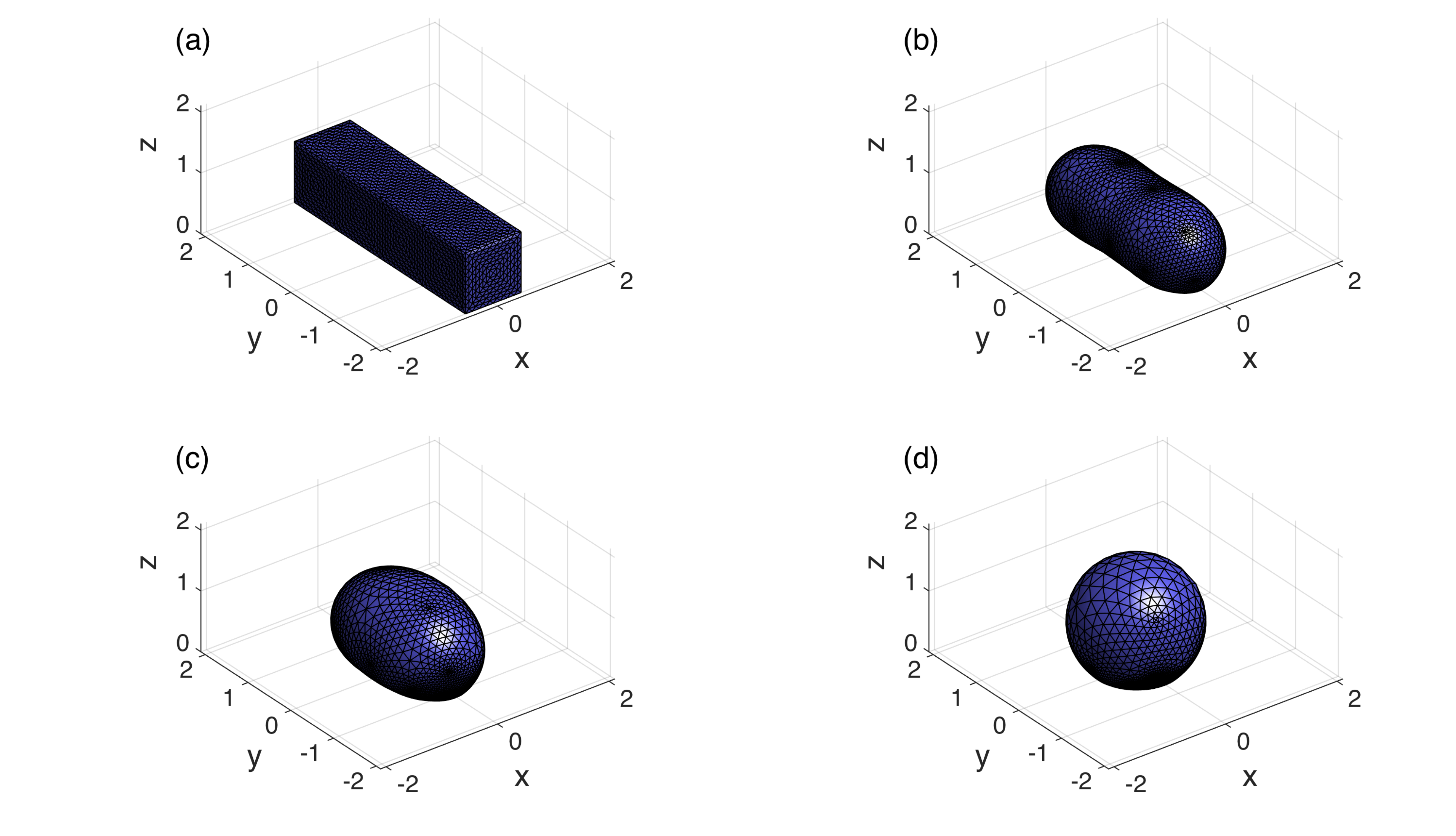}
\caption{Several snapshots in the evolution of an initial $(1,4,1)$ cuboid island towards its equilibrium shape: (a) $t=0$; (b) $t=0.10$; (c) $t=0.20$; (d) $t=1.94$, where the material constant is chosen as $\sigma=\cos(3\pi/4)$.}
\label{fig:141islands}
\end{figure}
We first focus on the isotropic surface energy case, i.e., $\gamma(\vec n)\equiv 1$. We start with a numerical example by initially choosing a small cuboid island with $(1,4,1)$ representing its width, length and height, and the material constant is chosen as $\sigma=\cos(3\pi/4)$.  The cuboid is initially almost uniformly discretized into 3584 small isosceles right triangles with total 1833 vertices and 80 vertices on the boundary curve. The time step is chosen uniformly as $\tau_m=2\times 10^{-4}$. As is shown in Fig.~\ref{fig:141islands}, it depicts several snapshots of the triangular surface mesh of the island towards its equilibrium shape. We can clearly observe that the sharp corner of the island gradually disappears and becomes smoother and smoother, and finally, the island evolves into a perfect spherical shape which is truncated by the flat substrate.

\begin{figure}[!htp]
\centering
\includegraphics[width=0.75\textwidth]{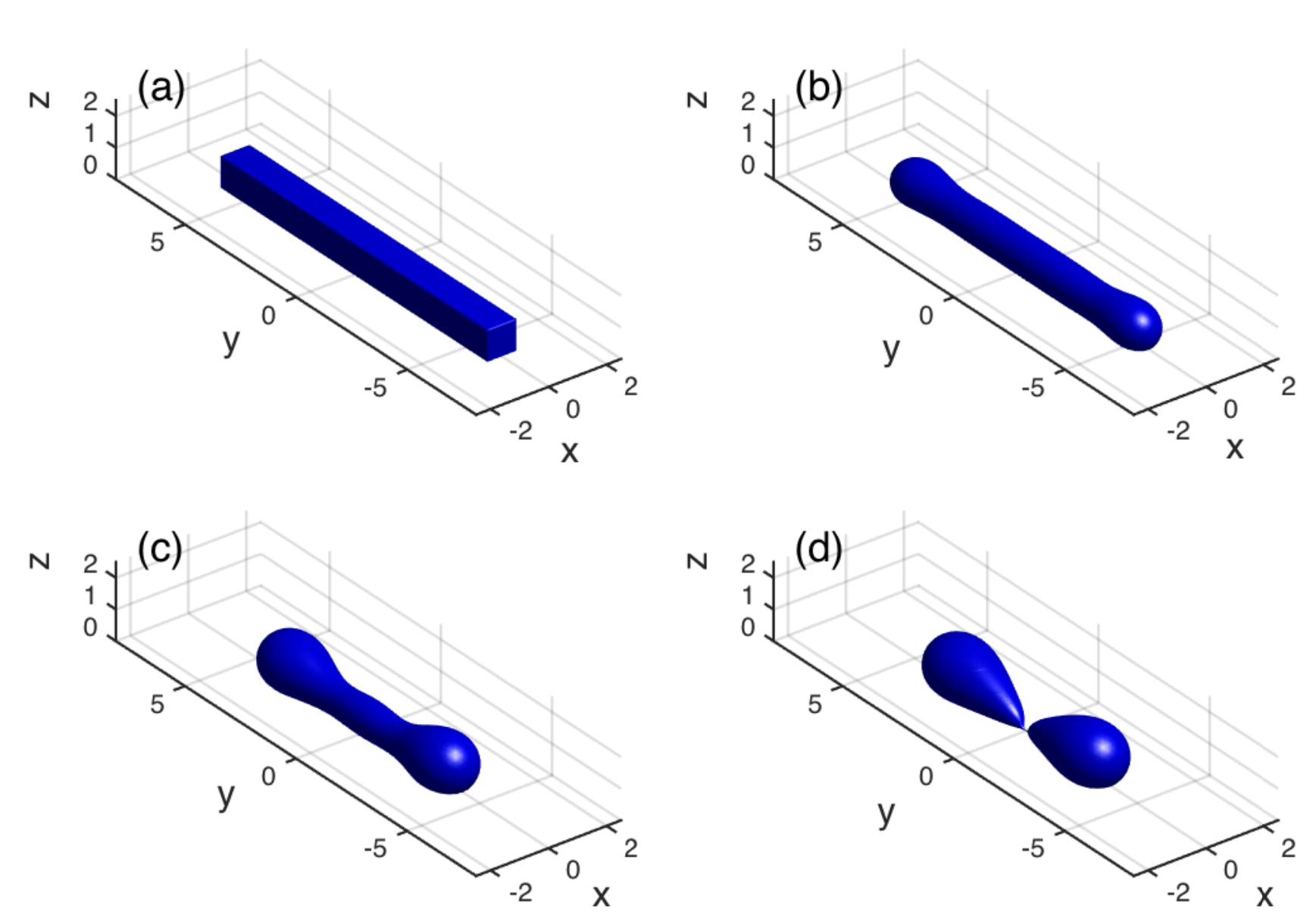}
\caption{Several snapshots in the evolution of an initial $(1,12,1)$ cuboid island until its pinch-off: (a) $t=0$; (b) $t=0.01$; (c) $t=0.75$; (d) $t=1.07$, where the material constant is chosen as $\sigma=\cos(3\pi/4)$. The initial surface mesh consists of 9728 triangles and 4969 vertices with 208 vertices on the boundary, and the time step is uniformly chosen as $\tau_m=10^{-4}$.}
\label{fig:1121islands}
\end{figure}
\begin{figure}[!htp]
\centering
\includegraphics[width=0.75\textwidth]{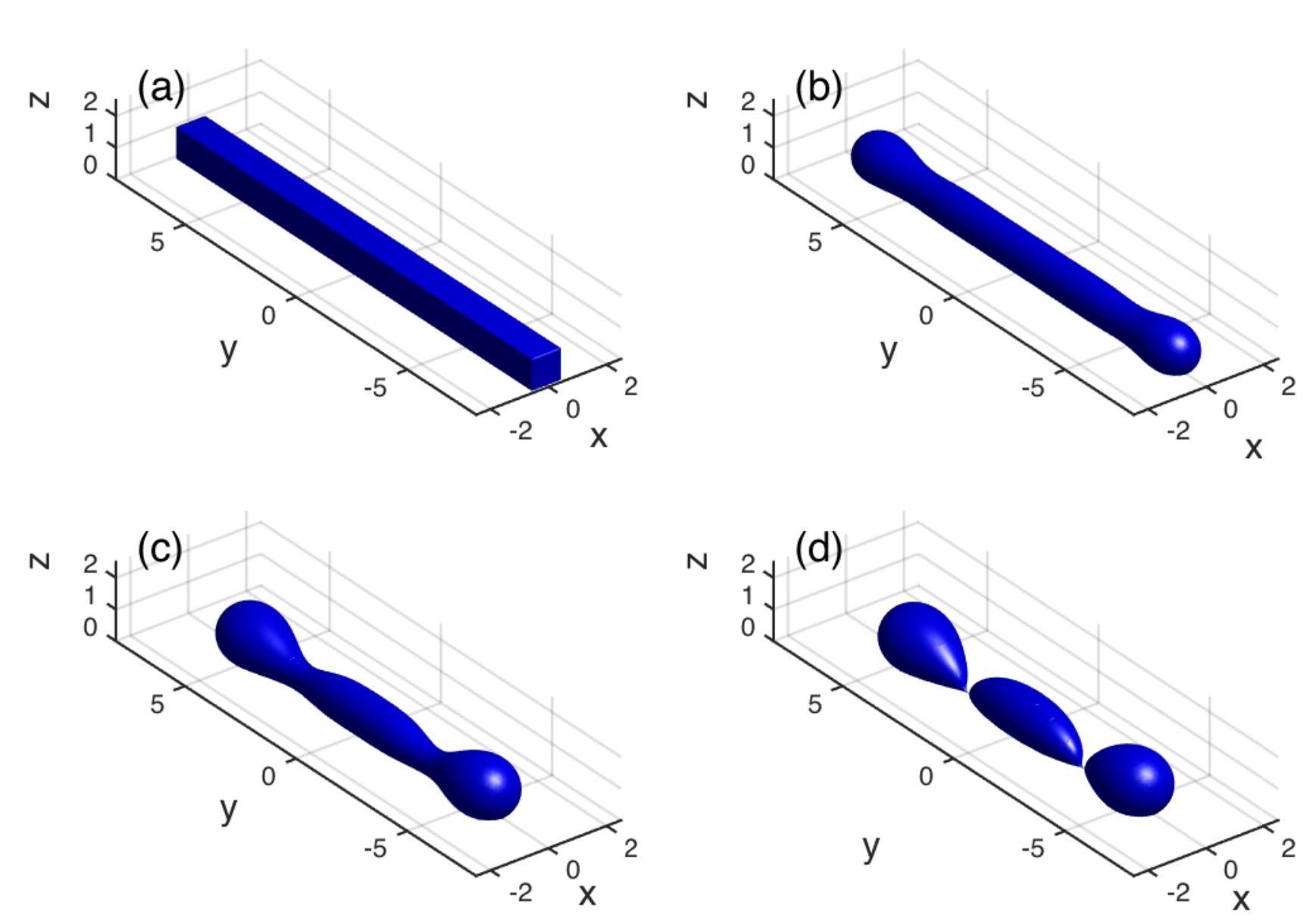}
\caption{Several snapshots in the evolution of an initial $(1,16,1)$ cuboid island until its pinch-off: (a) $t=0$; (b) $t=0.20$; (c) $t=0.90$; (d) $t=1.14$, where the material constant is chosen as $\sigma=\cos(3\pi/4)$. The initial surface mesh consists of 12800 triangles and 6537 vertices with 272 vertices on the boundary, and the time step is uniformly chosen as $\tau_m=10^{-4}$.}
\label{fig:1161islands}
\end{figure}

\begin{figure}[!htp]
\centering
\includegraphics[width=0.75\textwidth]{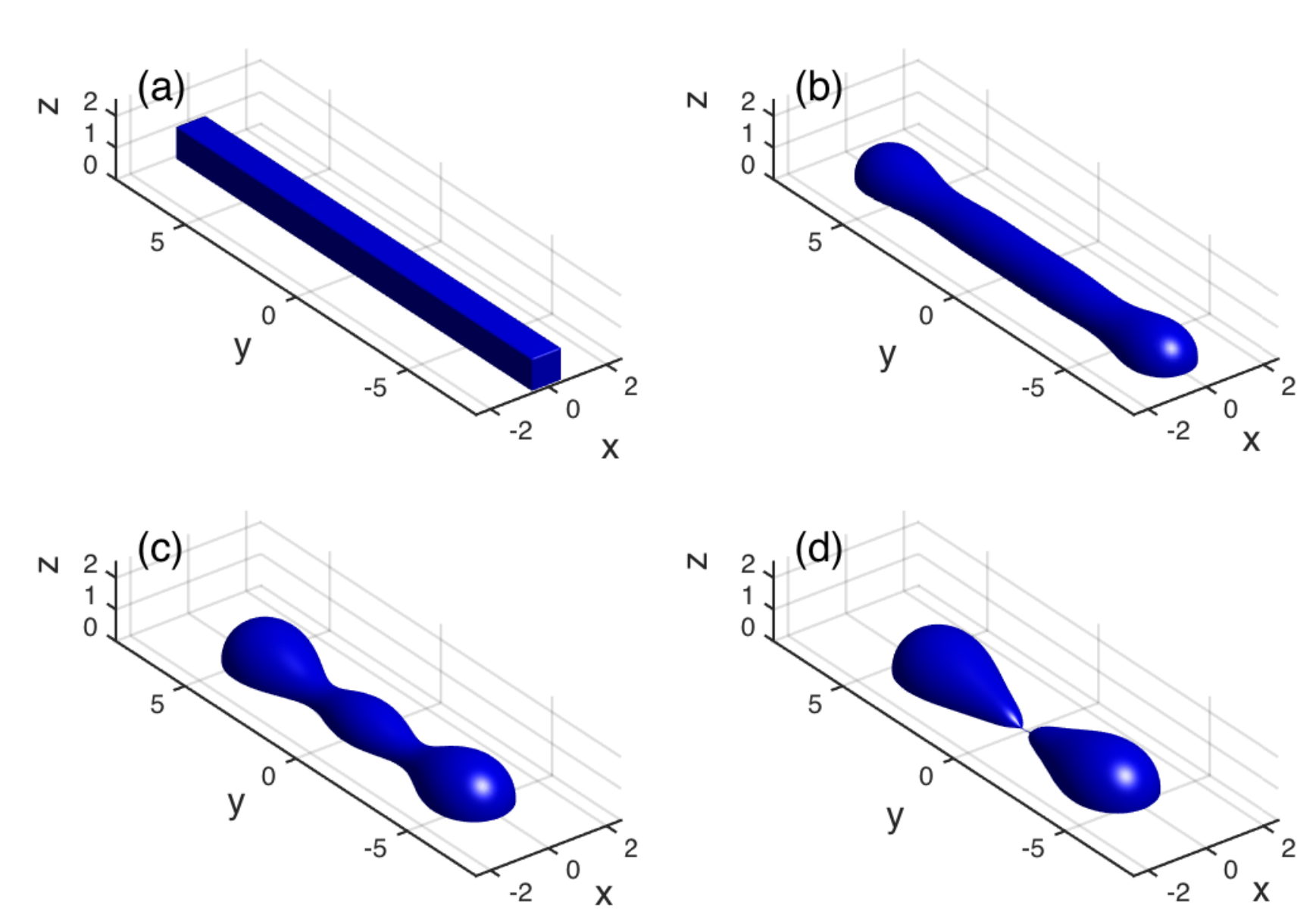}
\caption{Several snapshots in the evolution of an initial $(1,16,1)$ cuboid island until its pinch-off: (a) $t=0$; (b) $t=0.50$; (c) $t=2.00$; (d) $t=3.40$, where $\sigma=\cos(\pi/2)$. The initial surface mesh consists of 12800 triangles and 6537 vertices with 272 vertices on the boundary, and the time step is uniformly chosen as $\tau_m=10^{-4}$.}
\label{fig:1161islandsig}
\end{figure}
In general, a short island film tends to form a single spherical shape in order to arrive at its lowest energy state, while a long island film will pinch off and agglomerate into pieces of small isolated islands before it reaches at a single spherical shape. This pinch-off phenomenon has often been identified as the Rayleigh-like instability~\cite{Kim15,Rayleigh78} governed by surface diffusion. To study this particular phenomenon for solid-state dewetting problems, we perform a lot of numerical simulations with different initial islands given by different lengths of $(1,L,1)$ cuboids. As shown in Fig.~\ref{fig:1121islands} and Fig.~\ref{fig:1161islands}, for an initial $(1,12,1)$ cuboid island with material constant $\sigma=\cos(3\pi/4)$,  we can observe that the island evolves and breaks up into $2$ small isolated islands, and an initial $(1,16,1)$ cuboid island could break up into $3$ pieces of islands. Furthermore, by changing $\sigma=\cos(\pi/2)$, we observe that an initial $(1,16,1)$ cuboid island only breaks up into two small isolated islands (cf.~Fig.~\ref{fig:1161islandsig}). This indicates that when $\sigma$ increases, a cuboid island will become more difficult to pinch off.

\begin{figure}[!htp]
\centering
\includegraphics[width=0.85\textwidth]{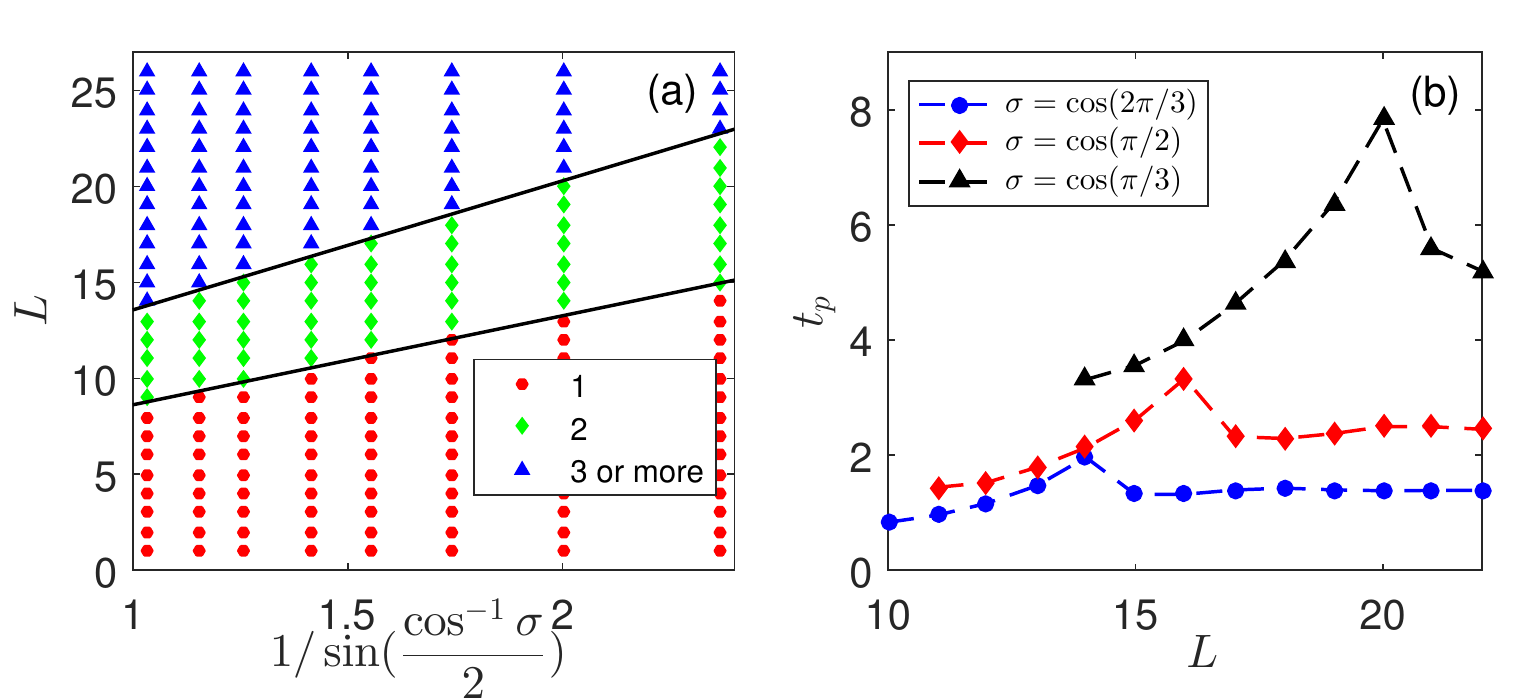}
\caption{(a) The number of islands formed from an initial $(1,L,1)$ cuboid island with material constant $\sigma$, where the 1-2 islands and 2-3 islands boundaries (solid lines) are linear curve fittings to our numerical simulations given by $L = 3.98 + 4.64/{\sin(\arccos\sigma/2)}$ and $L=6.84 + 6.73/{\sin(\arccos\sigma/2)}$; (b) The first pinch-off time $t_p$ for an initial $(1,L,1)$ cuboid island under three different material constant $\sigma$. }
\label{fig:PhaseL}
\end{figure}

From the above numerical simulations, we observe that there exist two critical lengths $L_1, L_2$ such that when $L_1< L< L_2$, an initial $(1,~L,~1)$ cuboid island will break up into $2$ small isolated particles; and when $L > L_2$, the cuboid island will break up into $3$ or more. Furthermore, we also observe that these two critical lengths are highly dependent on the material constant $\sigma$. By performing ample numerical simulations, as shown in Fig.~\ref{fig:PhaseL}(a), we plot the phase diagram for the numbers of islands formed from an initial $(1, L,1)$ cuboid island under different material constants $\sigma$. From the figure, we can observe that the critical lengths $L_1$ and $L_2$ both exhibit the reciprocal linear relationship with the variable $\sin(\arccos{\sigma}/2)$. We note that several similar relationships have also been observed and reported for the solid-state dewetting in 2D~\cite{Wang15,Dornel06}. Moreover, we also plot the first pinch-off time $t_p$ for an initial $(1,L,1)$ cuboid island under three different material constants, i.e., $\sigma=\cos(\pi/3),\cos(\pi/2),\cos(2\pi/3)$.
As shown in Fig.~\ref{fig:PhaseL}(b), we can observe that when $L$ increases, the first pinch-off time $t_p$ first increases
quickly to a maximum value, then decreases slowly to a constant. This is certainly reasonable because for an infinitely long
$(1,L,1)$ cuboid island, its first pinch-off time should be a constant which is only dependent of $\sigma$.

Motivated by recent experiments by Thompson's group~\cite{Thompson12,Ye11b}, we next numerically investigate morphology evolutions for island films initially with some special geometries, such as the cross shape and square-ring shape. In the following simulations, the height of the initial island film is always chosen to be $1$, and the material constant is fixed at $\sigma=\cos(3\pi/4)$, unless otherwise stated.

\begin{figure}[!htp]
\centering
\includegraphics[width=0.75\textwidth]{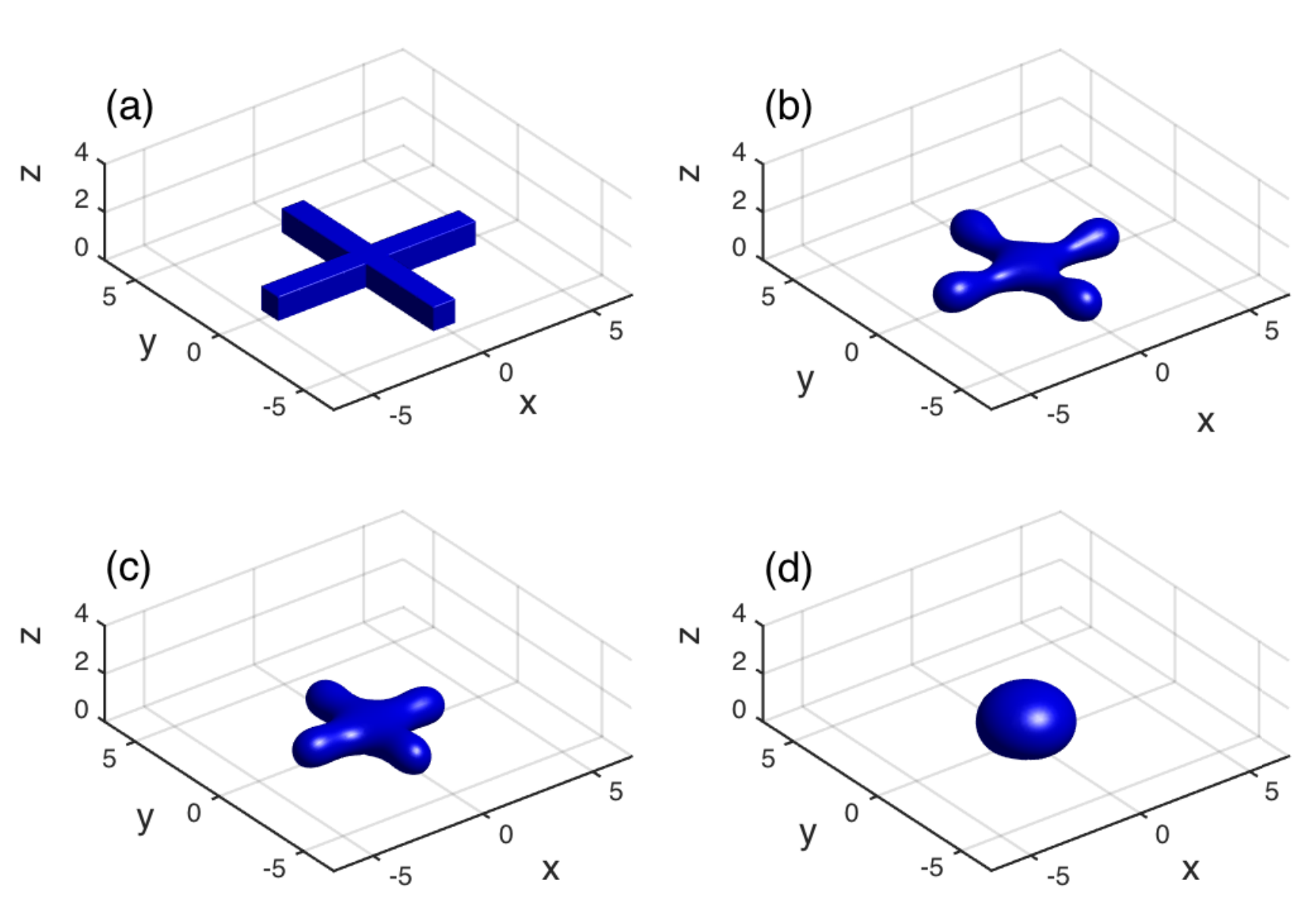}
\caption{Several snapshots in the evolution of an initially cross-shaped island towards its equilibrium, where the initial island consists of four (1,4,1) cuboids forming the limbs and one (1,1,1) cube sitting in the centre: (a) $t=0$; (b) $t=0.15$; (c) $t=0.50$; (d) $t=1.40$. The initial surface mesh consists of 13568 triangles and 6929 vertices with 289 vertices on the boundary, and the time step is uniformly chosen as $\tau_m=10^{-4}$.}
\label{fig:CrossShape1}
\end{figure}

\begin{figure}[!htp]
\centering
\includegraphics[width=0.75\textwidth]{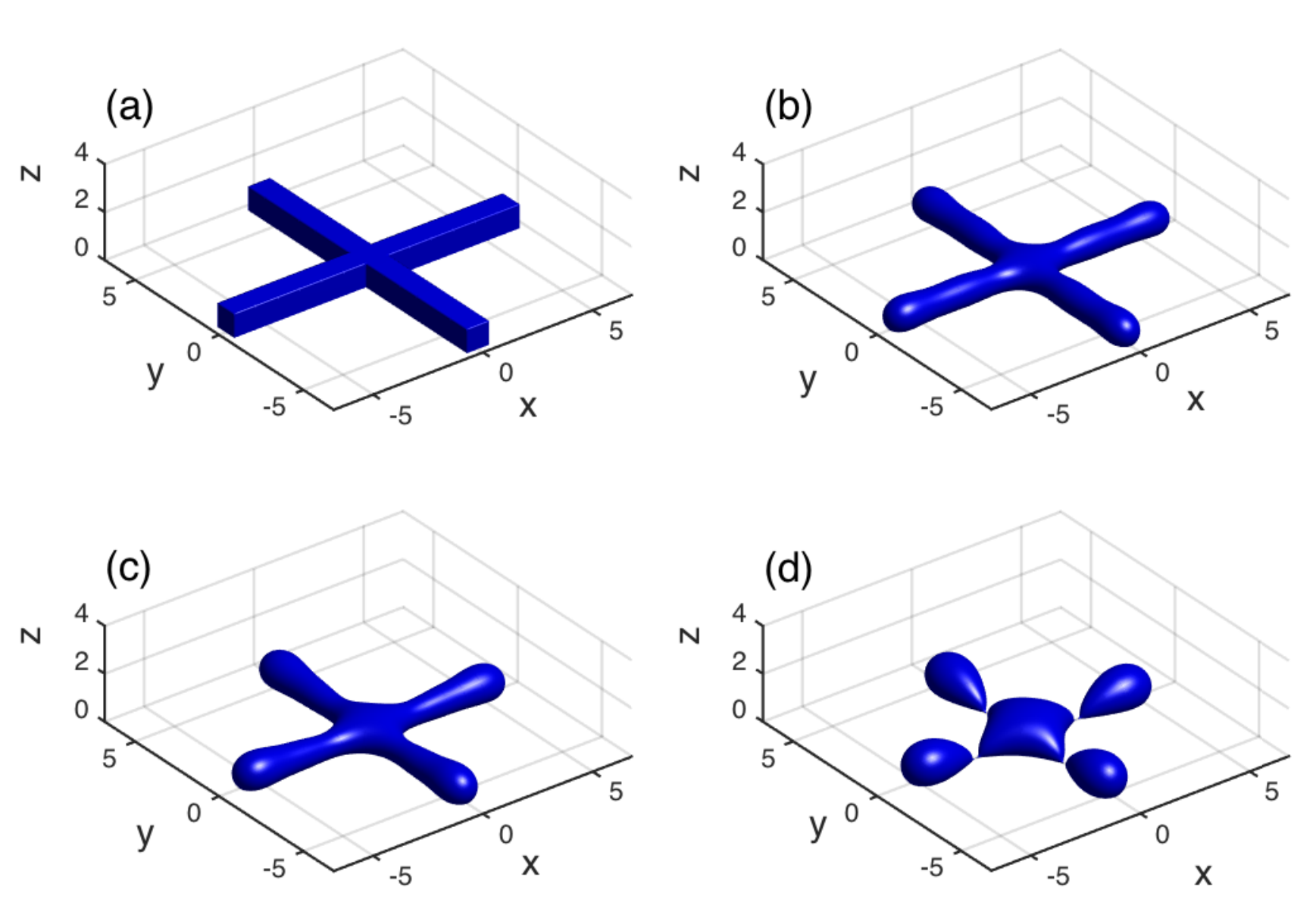}
\caption{Several snapshots in the evolution of an initially cross-shaped island before its pinch-off, where the initial island consists of four (1,6,1) cuboids forming the limbs and one (1,1,1) cube sitting in the centre: (a) $t=0$; (b) $t=0.05$; (c) $t=0.15$; (d) $t=0.386$. The initial surface mesh consists of 19712 triangles and 10065 vertices with 416 vertices on the boundary, and the time step is uniformly chosen as $\tau_m=10^{-4}$.}
\label{fig:CrossShape2}
\end{figure}

To compare evolution process with the recent experiments~\cite{Thompson12,Ye11b}, we first choose the initial geometry of the island film as a unit cube plus four equal limbs which are given by four $(1,L,1)$ cuboids (shown in Fig.~\ref{fig:CrossShape1}(a)). We test two numerical examples with length parameters $L=4$ and $L=6$. As can been seen in Fig.~\ref{fig:CrossShape1}, when the limbs are chosen to be shorter (i.e., $L=4$), we observe that the four limbs of the islands shrinks, and then the cross-shaped island eventually evolves into a single island with spherical geometry as its equilibrium shape. However, when the four limbs are chosen to be longer (i.e., $L=6$), its kinetic evolution of the island could be quite different. As depicted in Fig.~\ref{fig:CrossShape2}, instead of eventually forming a single spherical island, the cross-shaped island undergoes the pinch-off at somewhere and finally breaks up into five small isolated solid particles.

\begin{figure}[!htp]
\centering
\includegraphics[width=0.75\textwidth]{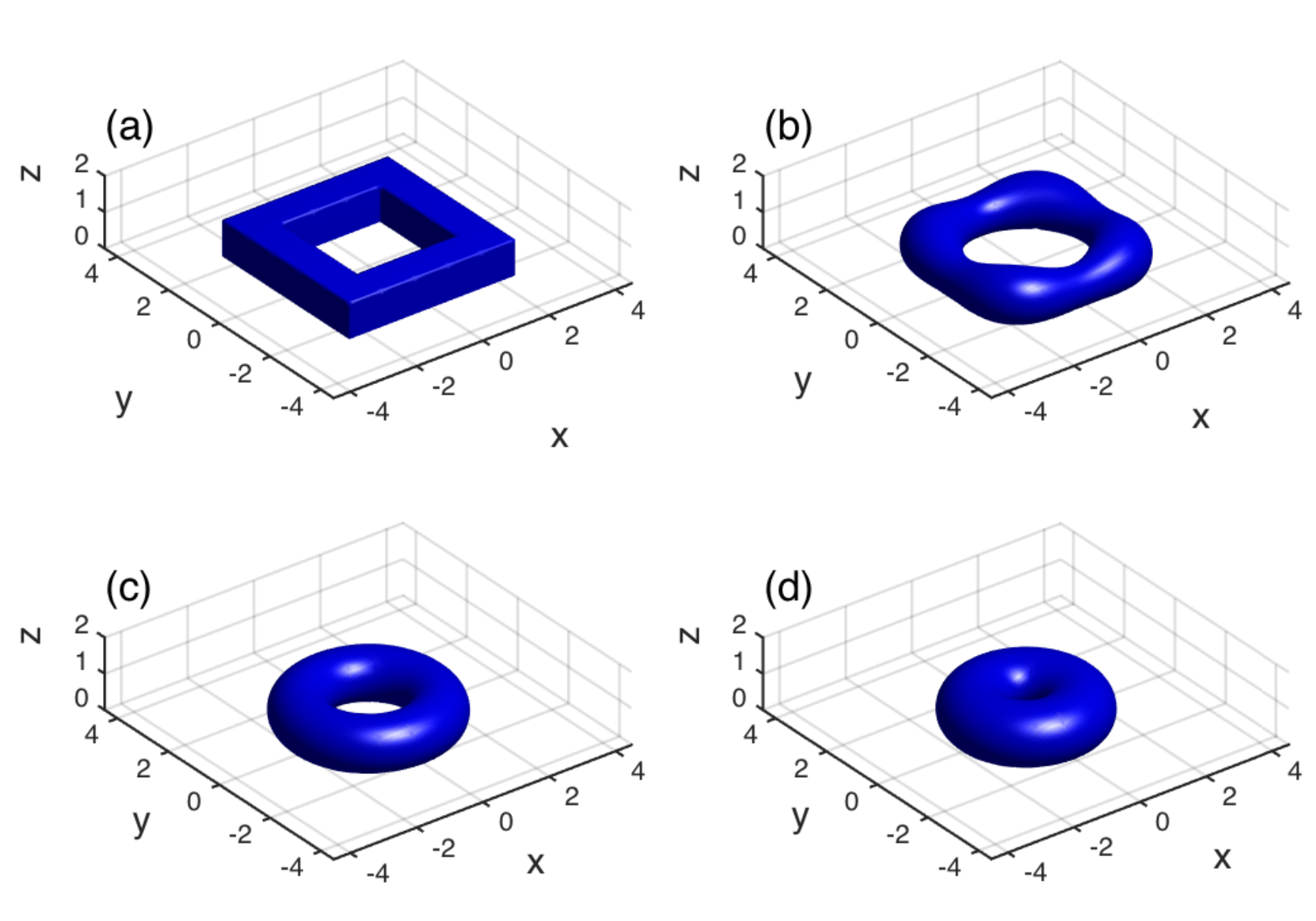}
\caption{Several snapshots in the evolution of an initial island of square-ring obtained from a $(5,5,1)$ cuboid by cutting out a $(3,3,1)$ cuboid from center: (a) $t=0$; (b) $t=0.15$; (c) $t=1.00$; (d) $t=1.50$. The initial surface mesh consists of 12288 triangles and 6272 vertices with 96 and 160 vertices for the inner and outer contact lines, respectively, and the time step is uniformly chosen as $\tau_m=5\times 10^{-4}$.}
\label{fig:hollow12}
\end{figure}

\begin{figure}[!htp]
\centering
\includegraphics[width=0.95\textwidth]{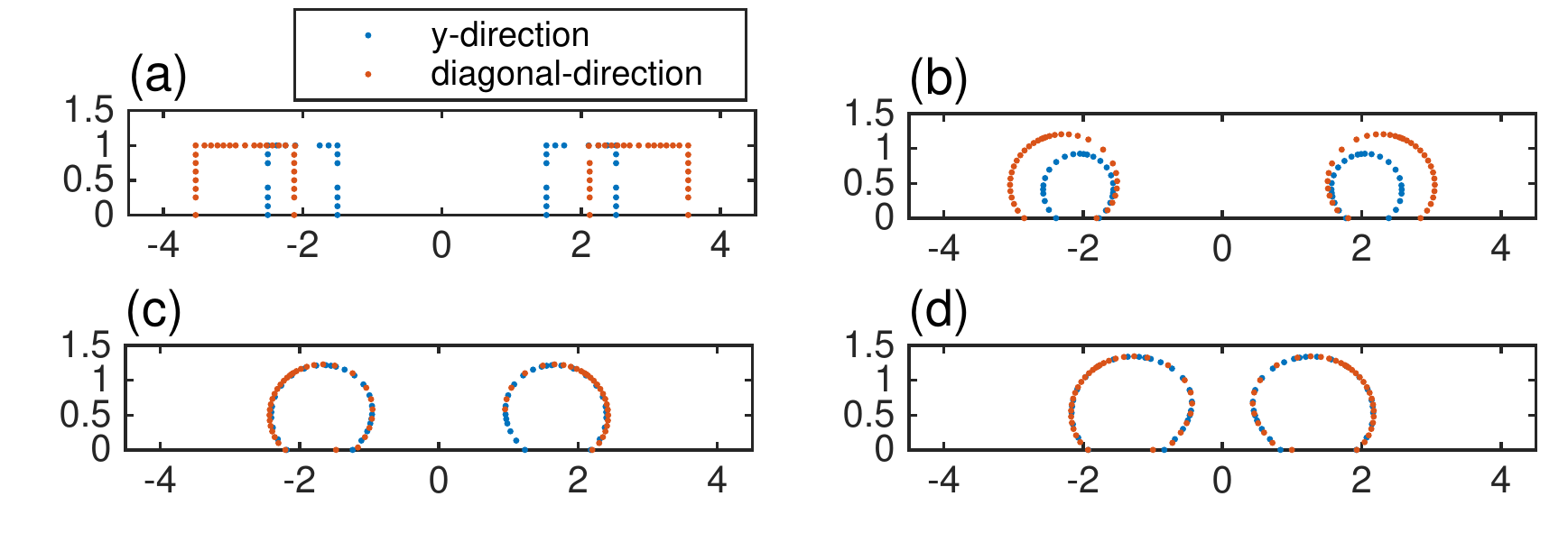}
\caption{The corresponding cross-section profiles of the island geometry in the evolution shown in Fig.~\ref{fig:hollow12}: (a) $t=0$; (b) $t=0.15$; (c) $t=1.00$; (d) $t=1.50$.}
\label{fig:hollow121r}
\end{figure}

We next consider the evolution of an island film which is initially chosen as a square-ring shape.
First, we choose an initial `fat' square-ring island, which is given by a $(5,5,1)$ cuboid by cutting out a $(3,3,1)$ cuboid from center (shown in Fig.~\ref{fig:hollow12}). Its geometry evolution as well the cross-section profile of the island are shown in Fig.~\ref{fig:hollow12} and Fig.~\ref{fig:hollow121r}, respectively. From these figures, we clearly observe that the square-ring island quickly evolves into a ring-like shape with different thickness along different cross-section directions
(see Fig.~\ref{fig:hollow121r}(b)). Subsequently, as time evolves, this ring-like shape evolves into a perfect toroidal shape (i.e., its thickness is the same along each cross-section direction) (see Fig.~\ref{fig:hollow121r}(c)), then the toroidal
island shrinks towards the center in order to reduce the total free energy.

\begin{figure}[!htp]
\centering
\includegraphics[width=0.75\textwidth]{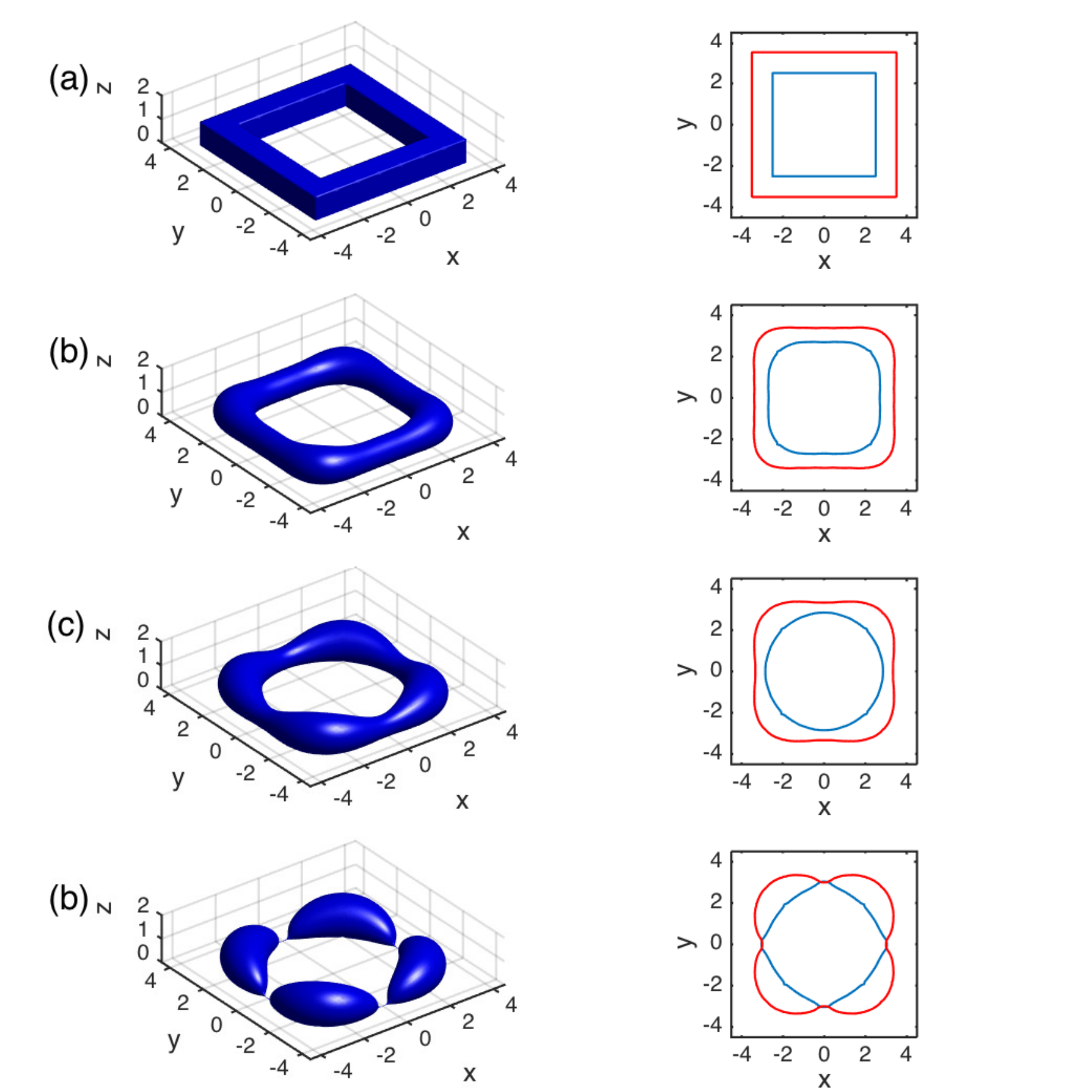}
\caption{Several snapshots in the evolution of an initial square-ring island obtained from a $(7,7,1)$ cuboid by cutting out a $(5,5,1)$ cuboid from center: (a) $t=0$; (b) $t=0.15$; (c) $t=0.40$; (d) $t=0.61$.}
\label{fig:hollow13}
\end{figure}

Furthermore, if we choose an initial `thin' square-ring island (i.e., enlarge the length of outer edge of the island, while fixing the inner-width of square-ring island still as $2$), the pinch-off events will occur as expected due to Rayleigh-like instability, as shown in Fig.~\ref{fig:hollow13} and Fig.~\ref{fig:hollow15}. Fig.~\ref{fig:hollow13} depicts the morphology
evolution and contact line migration (including inner and outer contact lines), when the length of the outer edge is chosen
as $7$. From this figure, we clearly see that the inner-width of the island becomes very quickly wavy along its different azimuthal directions; and as time evolves, the place where its inner-width is thick becomes thicker and thicker, while the place where it is thin becomes thinner and thinner; finally, when the width of thin place approaches to zero, the pinch-off events will happen such that it breaks up into $4$ pieces of small particles. On the other hand, if we continue to enlarge the length of outer edge (e.g., choose it as $12$), as shown in Fig.~\ref{fig:hollow15}, we can observe that the square-ring island will finally split into more pieces of small islands.

From the above numerical simulations, we can observe that the Rayleigh-like instability in the azimuthal direction and the shrinking instability in the radial direction are competing with each other to determine the solid-state dewetting evolution of a square-ring island. This is a competition between the two time scales: one for toroid shrinkage towards its center and the other for neck pinch-off along the azimuthal direction. When the square-ring island is very thin (shown in Fig.~\ref{fig:hollow13} and Fig.~\ref{fig:hollow15}), the Rayleigh-like instability dominates its kinetic evolution, and makes the island break up into small isolated pieces of particles; when it is very fat (shown in Fig.~\ref{fig:hollow12}),
the shrinking instability dominates the evolution, and make it shrink towards the center. The shrinking instability for a toroidal island on a substrate has been studied in~\cite{Jiang19,Zhao19} under the assumption of axis-symmetric geometry.
But it is still an open problem about quantitatively studying the competition effect by a simultaneous consideration of the shrinking instability and Rayleigh-like instability. Our proposed approach could offer a good candidate for exploring
this problem.

\begin{figure}[!htp]
\centering
\includegraphics[width=0.75\textwidth]{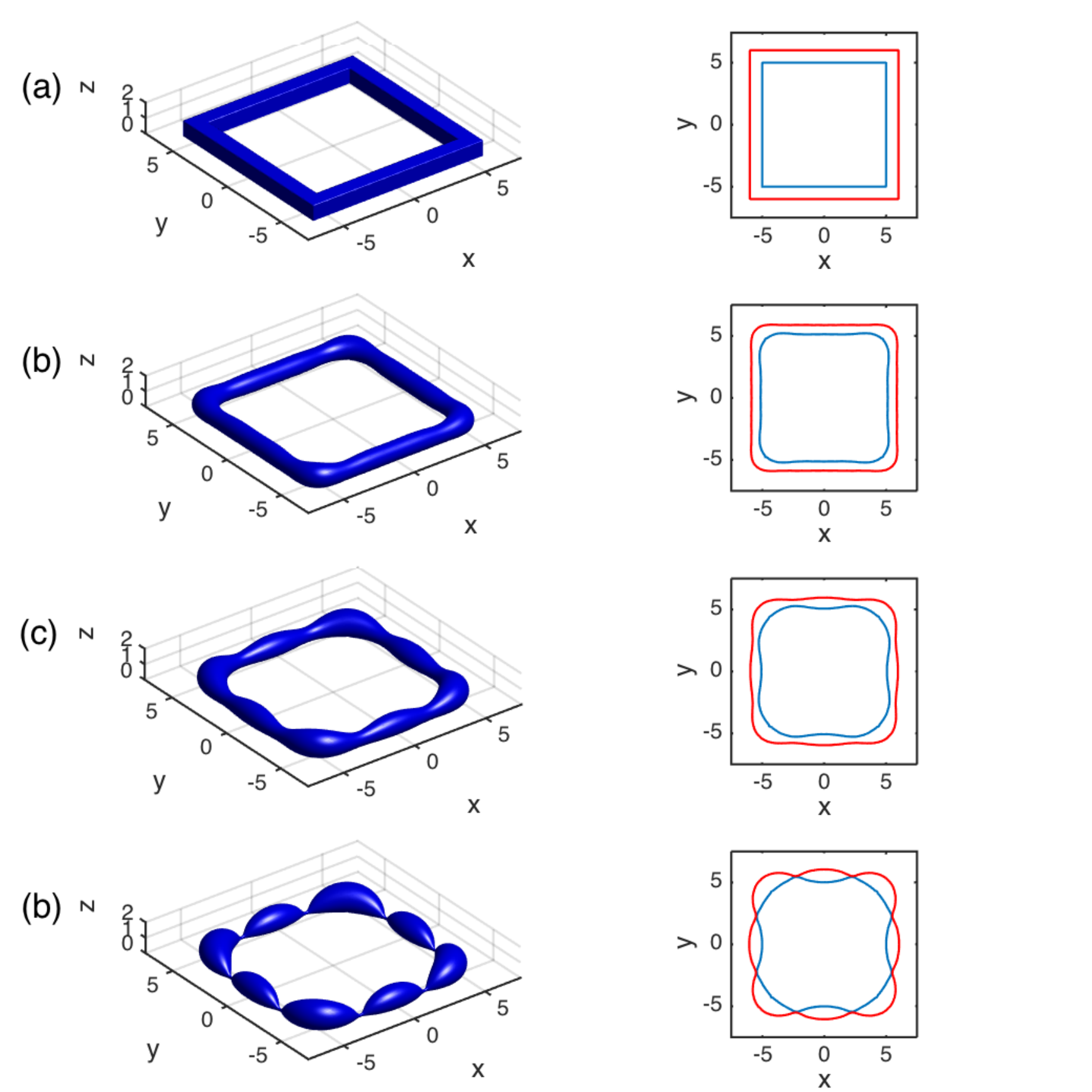}
\caption{Several snapshots in the evolution of an initial square-ring island obtained from a $(12,12,1)$ cuboid by cutting out a $(10,10,1)$ cuboid from center: (a) $t=0$; (b) $t=0.15$; (c) $t=0.70$; (d) $t=1.00$. }
\label{fig:hollow15}
\end{figure}

\subsection{For weakly anisotropic case}
\begin{figure}[!htp]
\centering
\includegraphics[width=0.75\textwidth]{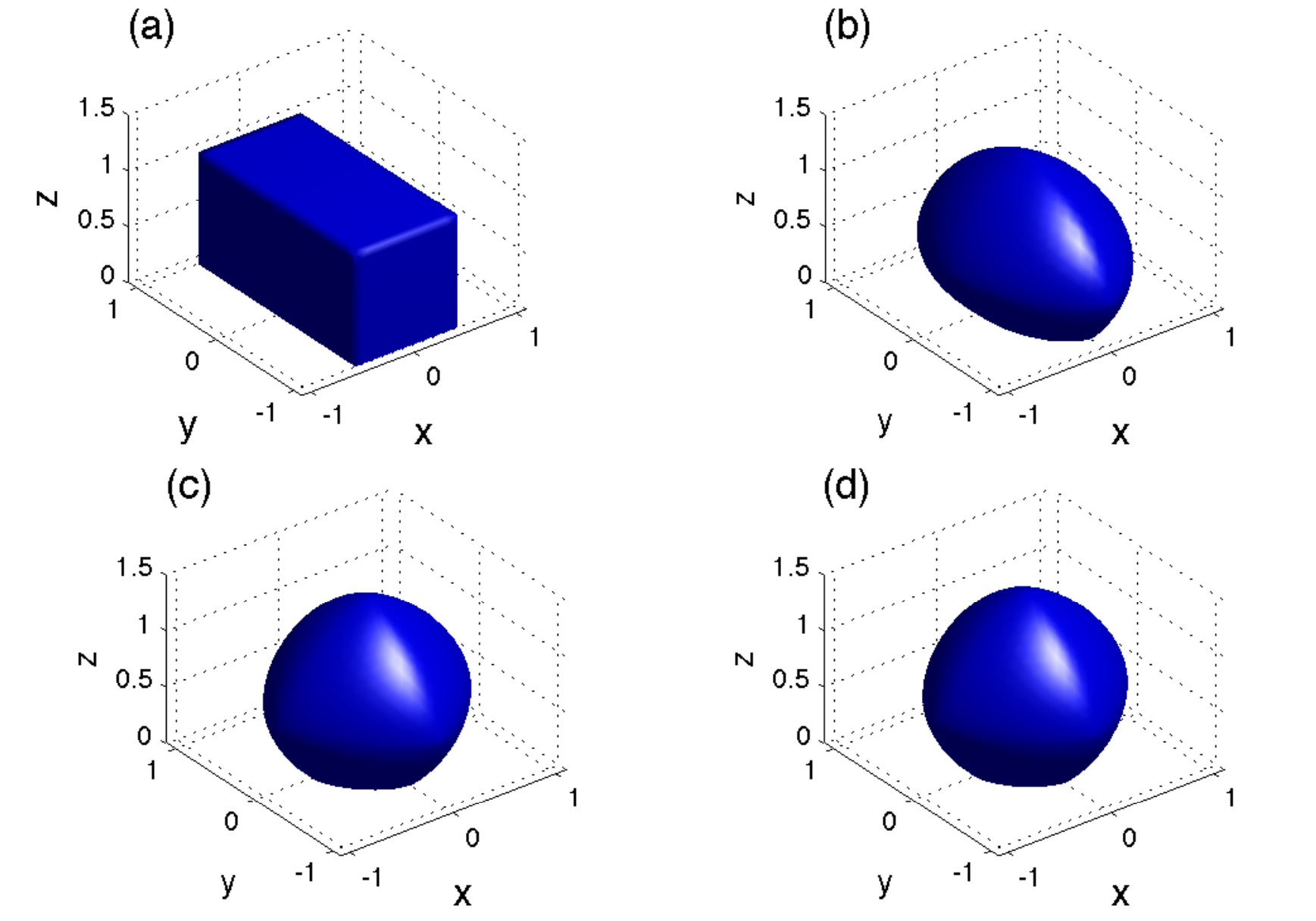}
\caption{Several snapshots in the evolution of an initially $(1,2,1)$ cuboid island towards its equilibrium under the cubic anisotropy with $a=0.3$: (a) $t=0$; (b) $t=0.02$; (c) $t=0.10$; (d) $t=0.21$, where $\sigma=\cos(5\pi/6)$, and the initial surface mesh consists of 2048 triangles and 1049 vertices with 48 vertices on the boundary, and the time step is uniformly chosen as $\tau_m=10^{-4}$.}
\label{fig:weak121}
\end{figure}

In this subsection, we perform some numerical simulations to investigate solid-state dewetting of thin films with anisotropic surface energies. We first focus on the following cubic anisotropy:
\begin{equation}
\gamma(\vec n) = 1 + a[n_1^4 + n_2^4 + n_3 ^4],\qquad 0\leq a<\frac{1}{3},
\end{equation}
where $a$ represents the degree of the anisotropy.

We start the numerical experiment for an initial $(1,2,1)$ cuboid island. The surface energy is chosen as the cubic anisotropy with $a=0.3$, and the material constant is chosen as $\sigma=\cos(5\pi/6)$. Several snapshots of the morphology evolution of the island film are shown in Fig.~\ref{fig:weak121}. From the figure, we can observe that the island film evolves towards a non-spherical shape with `sharp' corners.

\begin{figure}[!htp]
\centering
\includegraphics[width=0.75\textwidth]{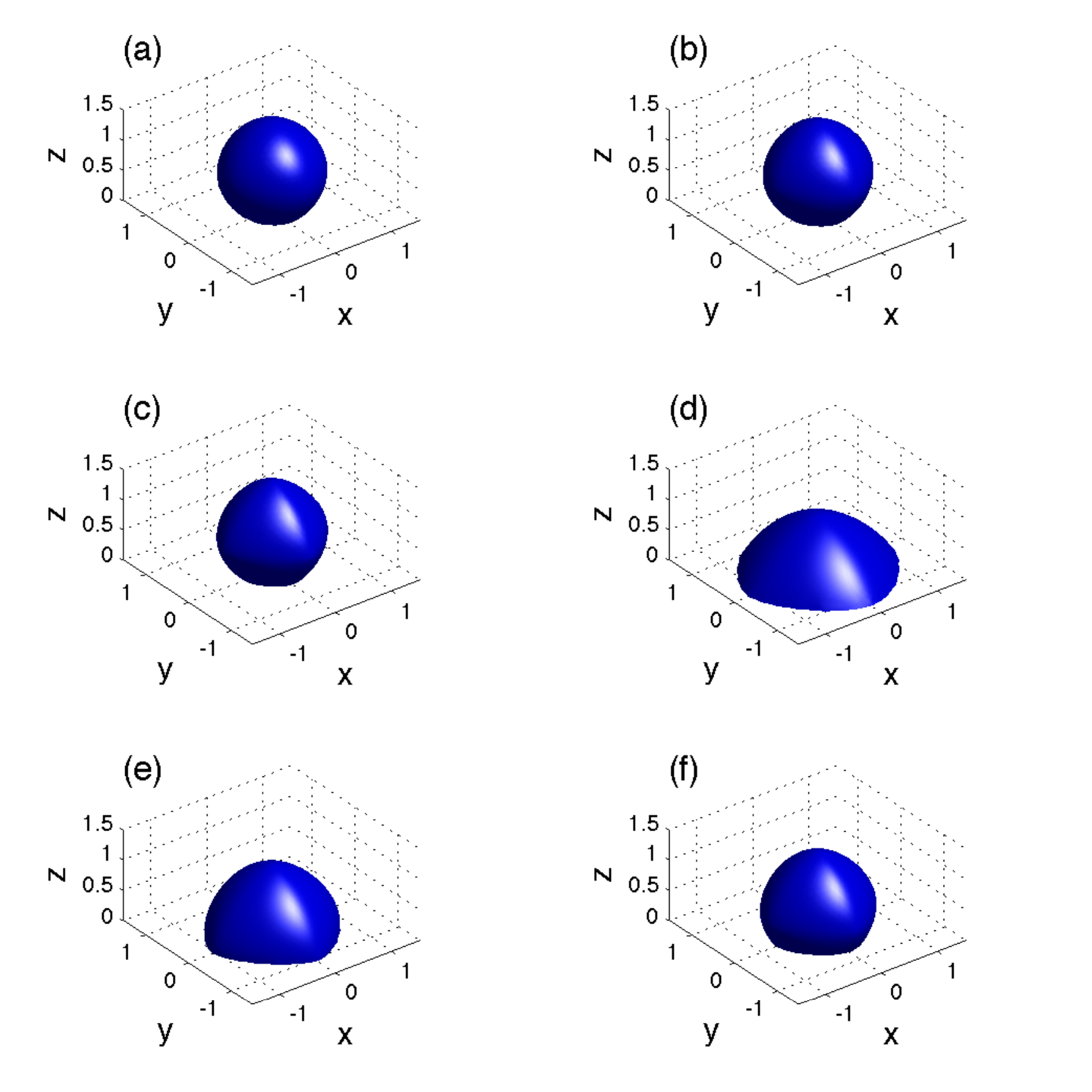}
\caption{The equilibrium geometry of islands under the cubic anisotropy with different material constants $\sigma$. From (a)-(c), the material constant is fixed as $\sigma=\cos(5\pi/6)$, and the degree of anisotropies are chosen as $a=0.1,~0.2,~0.3$; from (d)-(f), the degree of anisotropy is fixed at $a=0.3$, and the material constants are chosen as $\sigma=\cos(\pi/3),~\cos(\pi/2),~\cos(2\pi/3)$. }
\label{fig:Equiasigma}
\end{figure}

By performing numerical simulations, we next examine the equilibrium geometry under different degrees of cubic anisotropy and different material constants. As clearly shown in Fig.~\ref{fig:Equiasigma}(a)-(c), when the degree of the anisotropy is increased from $0.1$ to $0.3$ with a material constant $\sigma=\cos(5\pi/6)$, the equilibrium shape exhibits increasingly sharper and sharper corners. Furthermore, from Fig.~\ref{fig:Equiasigma}(d)-(f), when we change the value of the material constant, we also clearly observe the corresponding change in its equilibrium shape.

Under the cubic surface energy, as expected, the long island film will also exhibit Rayleigh-like instability and could pinch off into small pieces of islands. We consider the evolution of an initial $(1,12,1)$ cuboid island, and the degree of the cubic surface energy is chosen as $a=0.25$, and the material constant is chosen as $\sigma=\cos(2\pi/3)$. As can be seen in Fig.~\ref{fig:weak112islands}, the long cuboid island pinches off, and finally dewets to three pieces of small islands.
\begin{figure}[!htp]
\centering
\includegraphics[width=0.75\textwidth]{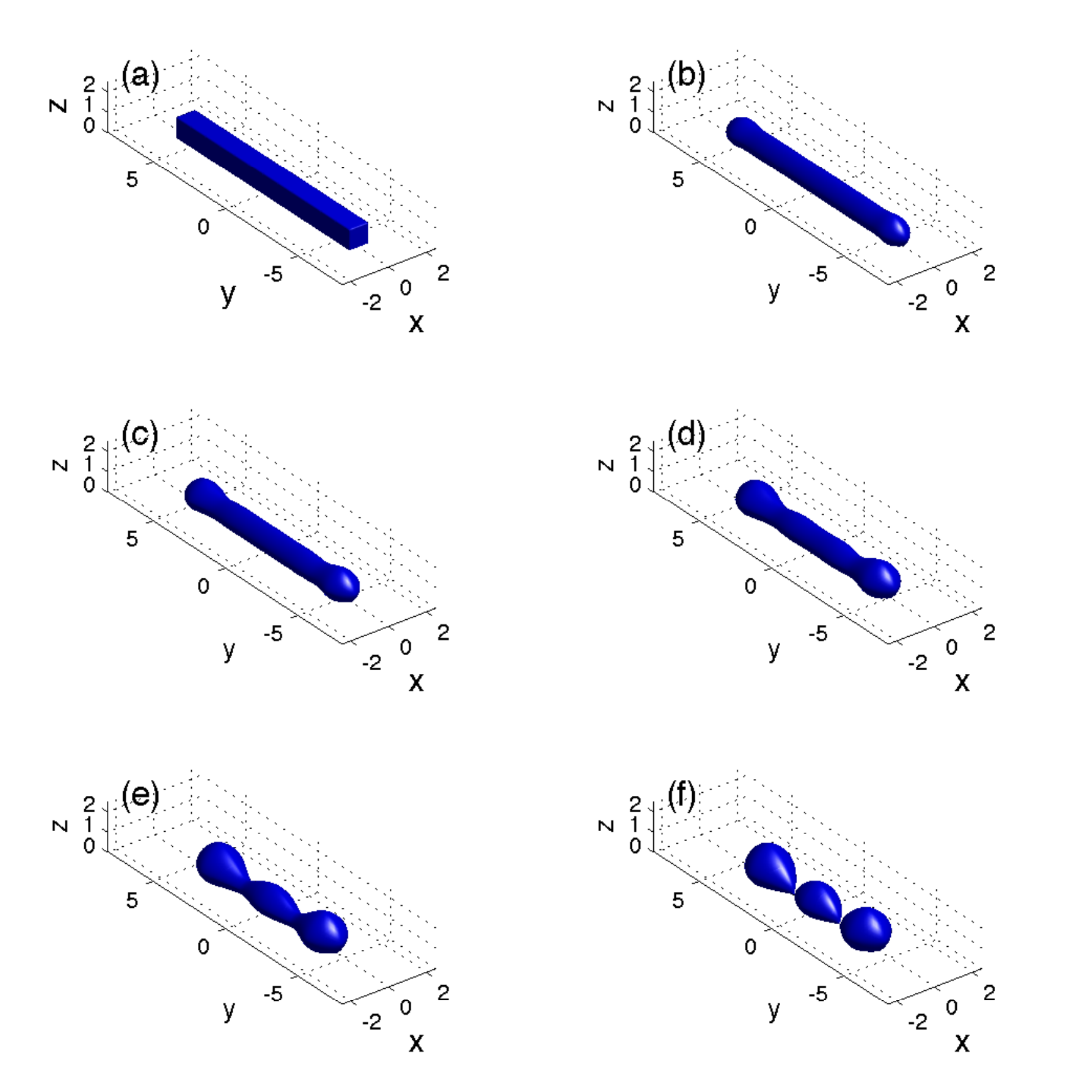}
\caption{Several snapshots in the evolution of an initially $(1,12,1)$ cuboid island until its pinch-off under the cubic anisotropy with $a=0.25$: (a) $t=0$; (b) $t=0.30$; (c) $t=0.60$; (d) $t=0.90$; (e) $t=1.40$; (f) $t=1.58$. The material constant is chosen as $\sigma=\cos(2\pi/3)$.}
\label{fig:weak112islands}
\end{figure}

In addition to the cubic anisotropy, we also perform numerical simulations for the ellipsoidal anisotropy, which is defined as
\begin{equation}
\gamma(\vec n) = \sqrt{a_1^2n_1^2+a_2^2n_2^2+a_3^2n_3^2},
\end{equation}
where $a_1,a_2,a_3$ are the ratio in each direction component. The corresponding equilibrium shape for this type of anisotropy is self-similar to an ellipsoid with semi-major axes $a_1, a_2, a_3$ (see the reference~\cite{Bao18a}), i.e.,
\begin{equation}
\frac{x^2}{a_1^2}+\frac{y^2}{a_2^2}+\frac{z^2}{a_3^2}=1.
\end{equation}
Fig.~\ref{fig:ellipoid} depicts the morphology evolution of an initial cuboid island towards its equilibrium shape. The surface energy anisotropy is chosen as $\gamma(\vec n) = \sqrt{2n_1^2+n_2^2+n_3^2}$.  From the figure, we can see that the island film eventually reaches at its equilibrium with an ellipsoidal shape. This is consistent with the theoretical prediction since the corresponding equilibrium shape for the anisotropy $\gamma(\vec n) = \sqrt{2n_1^2+n_2^2+n_3^2}$ is self-similar to an ellipsoid $\frac{x^2}{2} + y^2 + z^2=1$.

 \begin{figure}[!htp]
\centering
\includegraphics[width=0.75\textwidth]{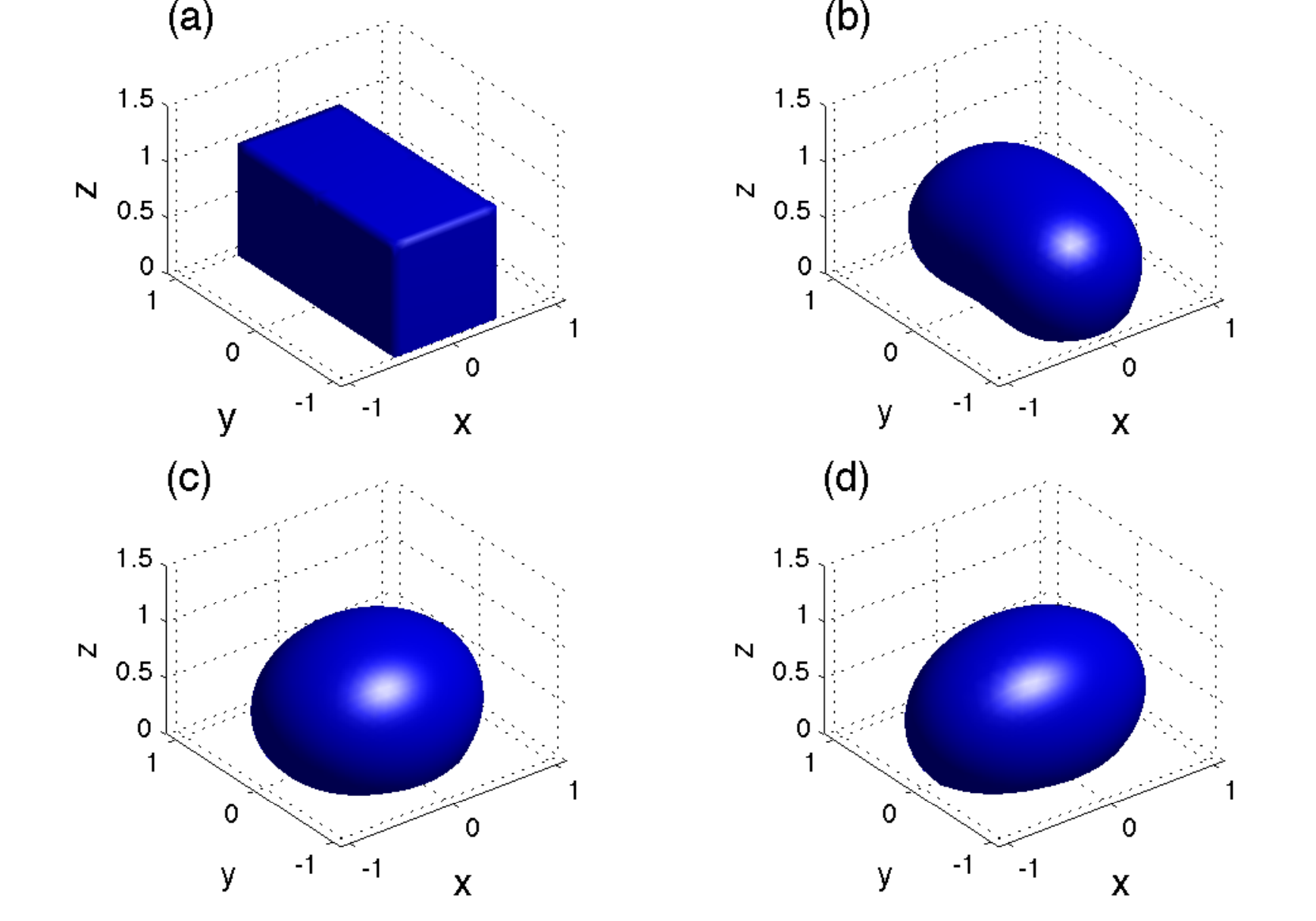}
\caption{Several snapshots in the evolution of an initially $(1,2,1)$ cuboid island towards its equilibrium under the ellipsoidal anisotropy with $a_1=\sqrt{2},a_2=1,a_3=1$: (a) $t=0$; (b) $t=0.01$; (c) $t=0.05$; (d) $t=0.20$, where the material constant is chosen as $\sigma=\cos(3\pi/4)$, and the initial surface mesh consists of 2048 triangles and 1049 vertices with 48 vertices on the boundary, and the time step is uniformly chosen as $\tau_m=10^{-4}$.}
\label{fig:ellipoid}
\end{figure}

\section{Conclusions}

Based on a novel variational formulation in terms of $\boldsymbol{\xi}$-vector formulation, we developed a parametric finite element method for solving solid-state dewetting problems in three dimensions (3D). In each time step, the contact line $\Gamma^{m+1}$ is first updated according to the relaxed contact angle condition; then, by prescribing the boundary curve $\Gamma^{m+1}$ as the explicit boundary condition, the variational formulation is discretized by a semi-implicit parametric finite element method
in order to obtain the new surface $S^{m+1}$. The resulted system is a system of linear and sparse algebra equations, which can be efficiently solved by many existing fast algorithms. We performed ample numerical examples for investigating solid-state dewetting of thin films with isotropic/weakly anisotropic surface energies. We observed that small islands tend to form spherical shapes as the equilibrium in the isotropic case, while long islands could break up into pieces of small isolated islands, and islands with some special geometries exhibit interesting phenomena and complexities.
Numerical results have demonstrated high efficiency and accuracy of the proposed PFEM scheme for solving solid-state dewetting problems with isotropic/weakly anisotropic surface energies in 3D.

\end{document}